\newtheorem{theorem}{Theorem}
\newtheorem{lemma}{Lemma}
\newtheorem{problem}{Problem}
\newtheorem{definition}{Definition}
\newtheorem{remark}{Remark}
\newcommand{\1}{\mathbf{1}}
\renewcommand{\emptyset}{\varnothing}
\def\diag#1{\mathrm{diag}\left(#1 \right)}
\def\defeq{\stackrel{\mathrm{def}}{=}}
\def\expec#1#2{{\mathbb{E}}_{#1}\left[ #2 \right]}
\newcommand{\calB}{\mathcal{B}}
\newcommand{\calG}{\mathcal{G}}
\newcommand{\calC}{\mathcal{C}}
\newcommand{\calW}{\mathcal{W}}
\newcommand{\calT}{\mathcal{T}}
\DeclareMathOperator*{\minimize}{minimize}
\DeclareMathOperator*{\maximize}{maximize}
\def\trace#1{\mathrm{Tr} \left(#1 \right)}
\def\sizeof#1{\left|#1  \right|}
\pgfplotsset{compat=1.15}
\pgfplotsset{compat=1.15}
\title{Dynamic Curing and Network Design in SIS Epidemic Processes}
\author{Yuhao Yi\footnote{Sichuan University, Chengdu, China,  yuhaoyi@scu.edu.cn. Part of this work was done when Y. Yi was with KTH Royal Institute of Technology, and was also affiliated with Digital Futures.} \and Liren Shan\footnote{Toyota Technological Institute at Chicago, Chicago, IL, USA, lirenshan@ttic.edu} \and
Shijie Wang\footnote{The Chinese University of Hong Kong, Ma Liu Shui, Hong Kong, China,   2019141460148@stu.scu.edu.cn. This work was done when S. Wang was with Sichuan University.}
\and
Philip E. Par\'e\footnote{Purdue University, West Lafayette, IN, USA, philpare@purdue.edu} \and Karl H. Johansson\footnote{KTH Royal Institute of Technology, Digital Futures, Stockholm, Sweden, kallej@kth.se}}
\begin{document}
\maketitle







\begin{abstract}
This paper studies efficient algorithms for dynamic curing policies and the corresponding network design problems to guarantee the fast extinction of epidemic spread in a susceptible-infected-susceptible (SIS) model. We consider a Markov process-based SIS epidemic model. 
We provide a computationally efficient curing algorithm based on the curing policy proposed by Drakopoulos, Ozdaglar, and Tsitsiklis (2014). 
Since the corresponding optimization problem is NP-hard, finding optimal policies is intractable for large graphs. 
We provide approximation guarantees on the curing budget of the proposed dynamic curing algorithm. We also present a curing algorithm fair to demographic groups. 

When the total infection rate is high, the original curing policy includes a waiting period in which no measure is taken to mitigate the spread until the rate slows down. 
To avoid the waiting period, we study network design problems to reduce the total infection rate by deleting edges or reducing the weight of edges. Then the curing processes become continuous since the total infection rate is restricted by network design.
We provide algorithms with provable guarantees for the considered network design problems. In summary, the proposed curing and network design algorithms together provide an effective and computationally efficient approach that mitigates SIS epidemic spread in networks.
\end{abstract}




\section{Introduction}
The modeling and control of contagious processes have received significant research interest due to the overwhelming societal cost of widespread epidemics.
Extensive mathematical models have been proposed to characterize the behavior of human pathogens, computer malware, or misinformation. The susceptible-infected-susceptible (SIS) model~\citep*{kermack1932contributions} and the susceptible-infected-recovered (SIR) model~\citep*{youssef2011individual} are the simplest and most popular models, for which the analysis and control of epidemic spread are investigated accordingly. In an SIS model, individuals can be infected multiple times. Network-based compartmental models have been studied for SIS processes, including stochastic models~\citep*{van2009virus,GMT05} and their mean-field approximations~\citep*{ahn2013global,pare2017epidemic}.

In a wave of the epidemic, the demand for medical services could outrun local resources including staff, supplies, equipment, hospital space, vaccines, and antivirals. Under such circumstances, policymakers sometimes need to coordinate regional resource allocation and usage to improve medical services~\citep*{TRFWP+12,DYSSM+22}. Therefore, optimal resource allocation problems have been studied for various epidemic models. In practice, medical resource allocation strategies are complemented by non-pharmaceutical strategies~\citep*{HBHTA+20} to reduce the number of infections and therefore the demand for medical resources. For example, travel restrictions, mask wearing, mass screening, school closure, and stay-at-home policies. Simulation studies have shown the advantages of combined medical and non-pharmaceutical countermeasures in effectively mitigating epidemic spread~\citep{HBHTA+20}. However, the problem of allocating curing resources and designing the contact network optimally in an SIS model remains to be investigated from a theoretical viewpoint to attain better understanding of their interaction.

In this paper, 
we study a Markov process-based SIS model~\citep*{GMT05} on a weighted undirected graph. Each edge represents the contact between two individuals, and the weight of the edges is proportional to its infection rate. We further investigate algorithms of intervention that are both effective and computationally efficient to guarantee the fast extinction of SIS processes. We provide algorithms for various interventions, including curing policies and contact restrictions. For curing policies, the resource constraint is the curing budget. To achieve fast extinction, an ordering of curing must be computed such that the total infection rate is maintained relatively small~\citep*{DOT14}. Thus, we aim to calculate a near-optimal ordering with an efficient algorithm.
We also consider contact restrictions to mitigate the spread concurrently with curing policies. 
The contact restriction problems can be formulated as network design problems. The restriction of contact between two individuals is modeled as the reduction of the edge weight. The cost of contact restriction is the sum of edge weight reductions. For network design problems with the cost constraint, we present approximation algorithms to the optimal solutions. 
In the end, the algorithms for curing policies and network designs can be integrated to guarantee the fast extinction of the epidemic spread, which shows the superiority of the combined medical and non-pharmaceutical countermeasures.

Both curing and network design algorithms for epidemic interventions have been proposed for various models. While the problems have been studied from various perspectives, most of the proposed policies are static, which means that the policy is fixed given the initial configuration.

For the curing policy, \citet*{PZEJ+14} studied a static resource allocation problem to minimize the cost to bound the spectral radius of a matrix in a mean-field approximation of an SIS model. The cost was defined as a convex function of the infection rate reductions. The authors formulated the problem as a semi-definite program that can be solved in polynomial time. Recently, ~\citet*{JV23} presented an algorithm with lower time and space complexity than the SDP-based algorithm in~\cite{PZEJ+14}. \citet*{WLPCQBB20} proposed the optimal strategy to stabilize SIS processes by allocating curing resources to a single node. 
Spread minimization problems are also studied in SIS models without using mean-field approximation. \citet*{BCGS10} proposed to allocate curing rates according to the degree of each node. Using this policy, logarithmic extinction time is achieved by allocating total curing rate proportional to the number of edges in the network. 
\citet*{CHT09} improved the algorithm by using the personalized PageRank vectors, which exploit the local structure of networks.

All the approaches mentioned above use a static policy. In contrast, a dynamic curing policy can adapt to the observed history of epidemic spread. Bang-bang control and feedback control policies were studied in~\citet*{KKYV18}. A feedback control policy~\citep*{WGIJ21} was proposed to suppress the SIS process. However, optimal curing resource allocation remains to be investigated rigorously. In this paper, we consider the same setting as in~\cite{DOT14}, where the curing budget is limited at any given moment, and a controller optimally distributes resources among all nodes. 
Drakopoulos et\ al.\ proposed a dynamic curing policy for the stochastic SIS model with budget constraint, namely the CURE policy~\citep{DOT14}. The approach seeks to cure individuals in an optimal order. Following this order, the total infection rate is always bounded by the \emph{impedance} of the initial infected set in the network. The impedance is the largest infection rate among any remaining subset of the initial infected set following the optimal order. Thus, the curing budget required by the policy depends on the \emph{cutwidth} of the network, which is the impedance of the whole network. 

For the network design, algorithms have been proposed for various epidemic models. 
For SI models, a $O(\log^2n)$-approximation algorithm is proposed by~\cite{ST04} to find the minimum weight of edges such that a given number of vertices are separated from the initially infected nodes by removing these edges. Constant factor bi-criteria algorithms for this problem are later given by~\citet*{HKPS05,EGAMSTW04}. Two modified problems considering demographic and individual fairness are investigated recently in~\citet*{DSTV22}. 
For the independent cascade susceptible-infected-removed (IC-SIR) model, recent studies by~\citet*{YSPJ20,BDSTV19} show approximations to minimize the expected number of infections by deleting edges, under various assumptions about the underlying graph. 
For the SIS model, \citet*{SAPVK15} provides approximation algorithms for spectral radius reduction problems by edge or node deletion.

Our paper is developed based on the dynamic curing policy in~\cite{DOT14}. The CURE policy provides insights into curing rate allocation. However, it can be improved in various aspects. First of all, as mentioned by~\cite{DOT14}, computing the impedance is {\bf{NP}}-hard. Therefore efficient approximation algorithms are needed. Secondly, fairness of curing policies is considered a major factor in public health decision-making, which may be compromised by an algorithm with minimum resource consumption. 
Recently, \cite{DSTV22} studied fair graph cuts for disaster containment.  The disaster model considered in \cite{DSTV22} is equivalent to a susceptible-infected (SI) epidemic model. In this paper, we study fair curing algorithms in a stochastic SIS model.
Moreover, the impedance of an infected set can be greater than the curing capability. When the impedance is large, the policy sets a waiting period. We address the first two issues by designing computationally efficient curing algorithms under resources and fairness constraints, and the last issue by allowing a controller to modify network connections.

\subsection{Contributions}
The main contribution of this paper is on providing efficient algorithms for curing policies and network design. 
We extend the $O(\log^{2}{n})$ approximation algorithm for cutwidth in \cite{LR99} to the calculation of impedance, which was mentioned in~\cite{DOT14} as an open problem. 
We further propose a variant of the problem that takes into consideration demographic fairness. We introduce the fair impedance and provide $O(\log^2 k)$ approximations for the fair impedance.
The algorithms are constructed by designing a dynamic programming solution based on a tree embedding~\citep{Rac08}.

In addition, we study the problem of dynamically modifying the graph structure to ensure that the given curing budget is adequate to achieve fast extinction and avoid the waiting period. We consider two settings: the SIS model with and without targeted curing rate allocation. For the first case, we propose algorithms to minimize the cut between infected nodes and susceptible nodes for a fixed order of nodes. For the latter case, we propose a $1.44$-approximation algorithm for the problem of minimizing the maximum cut. 
By combining the proposed curing and network design algorithms, we provide a comprehensive solution to control SIS processes in networks.

\subsection{Outline}
The remainder of the paper is organized as follows. 
In Section~\ref{pre.sec} we introduce notations and the considered epidemic model. In Section~\ref{form.sec} we propose the considered policies and corresponding optimization problems. In Section~\ref{algo.sec} we propose algorithms for calculating crusades for the CURE policy and its provably fair counterpart. Section~\ref{design.sec} presents algorithms for the edge weight reduction problems with and without targeted curing policies. Section~\ref{exper.sec} shows the effectiveness of the proposed algorithms by numerical simulations using real network data. Section~\ref{concl.sec} summarizes the presented results, and discusses open problems.
\section{Preliminaries}
\label{pre.sec}
In this section we first introduce some frequently used notations and definitions. Then we describe the considered SIS model.

\subsection{Concepts and Notations}
\label{notations.subsec}
We consider an SIS epidemic process in an undirected weighted graph $G=(V, E, w)$, in which $V$ is the set of nodes and $E$ is the set of edges with a weight function $w:E \mapsto [0,1]$. The degree $d_u$ of a node $u$ is defined as the sum of weights of all its incident edges. We denote by $d_{\max}$ the maximum degree of nodes in the graph. 
We adopt some of the terms used in~\cite{DOT14,DOT17}. We refer to a subset of $V$ as a \emph{bag}. For any subset $A$ of $V$, we use $A^c \defeq  V \backslash A$ to denote the complement of $A$. For any subset $A$ and any node $u \in A^c$, let $A+u \defeq A\cup \{u\}$ and $A-u \defeq A\backslash \{u\}$. We then define the cut of the graph $G$ with respect to a bag. We denote by $G[A]$ the subgraph of $G$ supported on the bag $A$, i.e.\ $G[A]=(A,E', w')$ where $E'=\{(u,v)\mid u,v\in A\}$ and $w'(e)=w(e)$ for $e\in E'$.

\begin{definition}
A cut of the graph $G$ is defined for a bag $A$ as the vertex partition $(A, A^c)$. The size of the cut is defined as
    $c(A)\defeq \sum_{(u,v)}w_{uv}$,
where $(u,v)\in E$, $u \in A$, $v\in A^c$, and $w_{uv}$ is the weight of the edge $(u,v)$.
\end{definition}

We also use the standard definition for the balanced cut in \cite{WS11}, where a cut is $\alpha$-balanced if $\min\{\sizeof{A},\sizeof{A^c}\} \geq \alpha \sizeof{V}$.
We introduce the following definition of maximum restricted cut.
\begin{definition}
Given a graph $\calG$ and a bag $A$, the maximum restricted cut of the bag $A$ is defined by
\begin{align}
    \phi(A)\defeq\max_{Q\subseteq A} c(Q)\,.
\end{align}
\end{definition}

We further recall some concepts defined in~\cite{DOT14,DOT17}.

\begin{definition}
For any two bags $A$ and $B$ satisfying $B\subseteq A$, a monotone crusade from $A$ to $B$ 
is a sequence of bags $p(A,B)\defeq(p_0,p_1,\ldots,p_k)$ where $p_0=A$, $p_k=B$; $p_{i} \subseteq p_{i-1}$ and $|p_{i-1}\backslash p_{i}|=1$ for any $i\in[k]$. We denote by $\calC(A,B)$ the set of all crusades from $A$ to $B$.
\end{definition}

\begin{definition}
The \emph{width} of a crusade $p=(p_0,p_1,\ldots, p_k)$ is defined by $z(p) \defeq \max_{0\leq i\leq k}\{c(p_i)\}$.
\end{definition}

\begin{definition}
Given a bag $A$, its \emph{impedance} $\delta(A)$ is defined as $\delta(A)\defeq \min_{p\in \calC(A,\emptyset)}z(p)$, namely the minimum width of a crusade from $A$ to $\emptyset$. 
\end{definition}


The impedance $\delta(V)$ is referred to as the \emph{cutwidth} of a graph, denoted by $\calW$. The problem of calculating the cutwidth is also referred to as the minimum cut linear arrangement problem~\citep{LR99}. Sometimes we use a superscript to indicate the associated network of a quantity, for example, $z_{G}(p)$, $\delta_{G}(A)$, and $\phi_{G}(A)$.

To discuss fairness of curing policies, we assign vertices in the graph to demographic groups. Suppose we are given sets $V_1$, $V_2$, $\ldots$, $V_\ell$, which composite the vertex set $V$. We assume that the number of groups $\ell$ is a constant. Let $p=(p_0,p_1,\ldots,p_k)$ be a crusade from $A$ to $B$. We consider a set $\calT$ of checkpoints $0 < \tau_1 <\tau_2 <\dots < \tau_s < k$, on which we want to check the fairness of curing policies. 
Let $\tau_0 = 0$ be a dummy checkpoint. For every $i \in \{0,1,2,\cdots,s-1\}$, we use $S_i = p_{\tau_{i}}\backslash p_{\tau_{i+1}}$ to denote the set of nodes between checkpoints $\tau_{i}$ and $\tau_{i+1}$.
For any $\gamma \geq 1$, the crusade $p$ is a $\gamma$-fair crusade with respect to the set of checkpoints $\calT$ 
if for every $i \in \{0,1,2,\cdots,s-1\}$, the set of nodes $S_{i}$ satisfies for every $h\in[\ell]$,
\begin{align}
\label{fairCrus.def}
\sizeof{S_{i} \cap V_h}
< \gamma \cdot  \frac{\sizeof{p_0\cap V_h}}{\sizeof{p_0}} \cdot \sizeof{S_{i}} + 1\,,
\end{align}
which means the curing ratio of a group $h$ in the set between checkpoints $\tau_{i}$ and $\tau_{i+1}$ is preserved within $\gamma$ times its initial ratio. We also call these sets $S_0,S_1,\cdots,S_{s}$ as a $\gamma$-fair partitioning of the bag $A\backslash B$. 

We denote by $\calC_{\gamma,\calT}(A,B)$ the set of all $\gamma$-fair crusade from $A$ to $B$ with respect to checkpoints $\calT$, then we define the notion of \emph{$\gamma$-fair impedance}.
\begin{definition}
Given a bag $A$ and a set of checkpoints $\calT$, its $\gamma$-fair impedance $\delta_{\gamma,\calT}(A)$ is defined as $\delta_{\gamma,\calT}(A)\defeq\min_{p\in \calC_{\gamma,\calT}(A,\emptyset)}z(p)$, the minimum width of a $\gamma$-fair crusade from $A$ to $\emptyset$ w.r.t. $\calT$. 
\end{definition}
We further denote $\1\{\cdot\}$ the indicator function of a given statement, which is equal to $1$ if the statement is true and $0$ otherwise.

\subsection{Epidemic Model}
\label{model.subsec}
We consider a networked SIS model in which each node can be in one of two states: Susceptible or Infected. The infection spreads according to a continuous time Markov process $\{I(t)\}_{t\geq 0}$ on the state space $\{0,1\}^V$ where $I_u(t)=0$ if node $u$ is susceptible and $I_u(t)=1$ if it is infected. Without ambiguity, we also use $I(t)$ to denote the set of all infected nodes at time $t$.
 
The process starts with a given initial state $I(0)$. Let $w_{uv}$ be the infection rate for each edge $(u,v) \in E$. At any time $t\geq 0$, the transition rate of a susceptible node $v$ to the infected state is defined by $\sum_{(u,v)\in E} w_{uv} I_u(t)$, which is the sum of infection rates $w_{uv}$ over all infected neighbors of $v$.

On the other hand, the curing rate of an infected node $u$ is denoted by $\rho_u(t)\geq 0$. We consider two typical scenarios: (a) $\rho_u(t)$ is decided by a network controller with the budget constraint $\sum_{u\in V}\rho_u(t)\leq r$ where $r$ is the curing budget; (b) the allocation of $\rho_u(t)$ for each $u$ is decided by the environment and can be adversarial. We define the total curing capacity be $r(t) \defeq \sum_{u\in V}\rho_u(t)$.  
\section{Problem Formulations}
\label{form.sec}
In this section we present formal definitions for the optimization problems that arise in curing ordering computation and targeted contact restrictions. 
\subsection{Curing Policies}
\label{curing.subsec}
The general strategy of the considered curing policies is to assure that the process has a downward drift most of the time during the process to guarantee short extinction time. Since curing infected nodes may increase the cut between the infected and susceptible nodes, the ordering of the nodes getting cured has a significant impact on the process. The CURE policy proposed in~\cite{DOT14} seeks to find paths along which the cuts are maintained as small as possible such that the spread stops in sublinear time with high probability. We briefly recall the CURE policy as follows:
\begin{itemize}
    \item Wait until $c(I(t))\leq r/8$. Let $A$ be the set of infected nodes right after the waiting period.
    \item Start a segment. Calculate the optimal crusade and obtain an ordering $\{v_1,\dots, v_{|A|}\}$ in the beginning of a segment. Let $C\defeq \{v_2,\dots, v_{|A|}\}$. At any $t$ before the segment ends, allocate the entire curing budget to cure an arbitrary node in the set $D(t)\defeq I(t)\backslash C$. A segment ends when $I(t)=C$ or $|D(t)|\geq r/(8d_{\max})$. If $I(t)=C$, a new segment will be started. If $|D(t)|\geq r/(8d_{max})$, a waiting period will be started.
\end{itemize}
\cite{DOT14} does not specify the algorithm to calculate the optimal crusade.
A dynamic programming algorithm with exponential time and space complexity is obtained by directly applying the Bellman equation given in the paper. \cite{DOT14} listed the problem of improving the computational complexity of the algorithm as an open problem. In this paper we study 
the following optimization problem.


\begin{problem}
Given a network $G$ and a bag $A$, find a crusade $p$ from $A$ to $\emptyset$ such that
\begin{align}
    \minimize_{p\in C(A,\emptyset)}\quad &
    z(p)\,.
\end{align}
\end{problem}
It has been shown that the problem of calculating the cutwidth (and hence the impedance) of a graph is {\bf{NP}}-hard~\citep{Gav77}. Therefore we 
resort
to approximation algorithms which compute a crusade whose width is bounded within a certain factor compared to the minimum width.



By integrating demographic information, a $\gamma$-fair curing policy only considers $\gamma$-fair crusades as feasible target paths of curing. Apart from this distinction, the policy follows a similar 
procedure
as the CURE policy. Thus we also study approximation algorithms for the following problem. 

\begin{problem}
Given a network $G$, a bag $A$, a number $\gamma\geq 1$, and a set of checkpoints $\calT$, find a crusade $p$ from $A$ to $\emptyset$ such that
\begin{align}
    \minimize_{p\in C_{\gamma,\calT}(A,\emptyset)}\quad &
    z(p)\,.
\end{align}
\end{problem}

\subsection{Network Design}
\label{designForm.subsec}
A drawback of the CURE policy is that when $c(I(t))>r/8$, the policy starts a waiting period. This is clearly undesired since the epidemic spreads to more nodes while no measures are taken. For similar reasons, the CURE policy only works when the budget is $\Omega(\calW)$, where $\calW$ is the cutwidth of the network. Moreover, the authors have shown in subsequent work~\citep*{DOT15,DOT17} that there exists a phase transition such that when $r=o(\calW)$ the extinction time is exponential regardless of the curing policy.

For public health decision-making, it is advisable to consider both medical and non-pharmaceutical intervention strategies. Real-world epidemic response is often a composition of medical resource allocation and contact restrictions. Previous work has shown that, by applying along with preferentially vaccinating urban locations,
travel restrictions can postpone the arrival of peak in the spread of influenza~\citep*{EEFF19}. School closure and social distancing are combined with antiviral treatment or vaccination to reduce the cost of an influenza pandemic~\citep*{KHM13}. A simulation of smallpox shows the effectiveness of combining vaccination with school closure~\citep*{ZI10}. Motivated by these simulation studies, we investigate the problem of reducing network connections to achieve fast extinction with insufficient medical resources.

Given a graph $G$ and a budget $r$, if $r$ is insufficient for any policy to suppress the spread, we ask the question of how to modify the network to effectively reduce its impedance and cutwidth. Since modifying the network results in the change of optimal curing order, and changing the curing order affects the optimal weight reduction of the network, optimizing the two policies simultaneously is challenging. We consider a simplified problem with any fixed curing order\footnote{In practice we use the ordering attained by applying curing policies to the original contact network.}.

Given the CURE policy, we study an alternative problem, in which we modify the graph to effectively reduce the width of a \emph{fixed} crusade computed in the beginning of each curing segment. Specifically, we modify the graph by reducing the weight of each edge $(u,v) \in E$ to $w'_{uv} = w_{uv}-\Delta_{uv}$. Our goal is to minimize the total weight reduction such that the width of the current crusade is no more than a threshold $b$.  

\begin{problem}
\label{fractal.prob}
Given a graph $G$ with the edge weight function $w$, a bag $A$ and a crusade $p$ from $A$ to $\emptyset$, a threshold $b\in \mathbb{R}_+$, find the weight reduction $\Delta_{uv}$ of each edge $(u,v)\in E$ for the following optimization program: 
\begin{align}
    \minimize_{\Delta} \quad & \sum_{(u,v)\in E}\Delta_{uv} ,\\
    \textrm{subject to}\quad  & 0\leq \Delta_{uv} \leq w_{uv}, \forall (u,v)\in E\,, \nonumber\\
    & z_{G'}(p) \leq b\,,\nonumber
\end{align}
where $G'=(V, E, w')$ is the modified graph with weight $w'_{uv}=w_{uv}-\Delta_{uv}$ for $(u,v)\in E$. 
\end{problem}

We also consider this problem under the constraint that weight reductions $\Delta_{uv} \in \{0,w_{uv}\}$ for all edges, which means each edge is either kept or removed completely.  

\begin{problem}\label{integer.prob}
Given a graph $G$ with the edge weight function $w$, a bag $A$ and a crusade $p$ from $A$ to $\emptyset$, a threshold $b\in \mathbb{R}_+$, find the weight reduction $\Delta_{uv}$ of each edge $(u,v)\in E$ for the following optimization program: 
\begin{align}
    \minimize_{\Delta} \quad & \sum_{(u,v)\in E}\Delta_{uv} ,\\
    \text{subject to}\quad  &  \Delta_{uv} \in \{0,w_{uv}\}, \forall (u,v)\in E\,, \nonumber\\
    & z_{G'}(p) \leq b\,,\nonumber
\end{align}
where $G'=(V, E, w')$ is the modified graph with weight $w'_{uv}=w_{uv}-\Delta_{uv}$ for $(u,v)\in E$.
\end{problem}




When the curing policy is decided by the environment, the ordering of curing is arbitrary. Therefore, a variant of the maximum cut of the network decides the extinction time of the SIS process. We also investigate the problem of minimizing the maximum cut of a given bag.

\begin{problem}
\label{mc_fractal.prob}
Given a graph $G$ with the edge weight function $w$, a bag $A$, and a threshold $b\in \mathbb{R}_+$, find the weight reduction $\Delta_{uv}$ of each edge $(u,v)\in E$ for the following optimization program: 
\begin{align}
    \minimize_{\Delta} \quad & \sum_{(u,v)\in E}\Delta_{uv} 
    ,\\
    \text{subject to}\quad  & 0 \leq \Delta_{uv} \leq w_{uv}, \forall (u,v)\in E\,, \nonumber\\
    & \phi_{G'}( A)\leq b\,.\nonumber
\end{align}
where $G'=(V, E, w')$ is the modified graph with weight $w'_{uv}=w_{uv}-\Delta_{uv}$ for $(u,v)\in E$.
\end{problem}

\section{Curing Algorithms}
\label{algo.sec}
In this section we provide algorithms for approximating impedance and $\gamma$-fair impedance of a bag. The definition of these problems are given in Section~\ref{curing.subsec}.

\subsection{Algorithm for CURE Policy}
In this section, we provide a polynomial-time curing algorithm in the network SIS model. Specifically, we combine a polynomial-time approximation algorithm for computing the crusade with the CURE policy proposed in~\cite{DOT14}. We 
present
the following theorem. 

\begin{theorem}\label{thm:extinctiontime}
Given a graph $G$, suppose the curing budget $r \geq \max\{\alpha \calW \log^2 n, 8d_{max} \log n\}$, where $\calW$ is the cutwidth of graph $G$ and $\alpha$ is a fixed constant. Then, there exists a polynomial-time curing algorithm such that the expected extinction time is at most $O(n\log^2 n / r)$.
\end{theorem}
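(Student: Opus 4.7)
The plan is to compose the polynomial-time approximation algorithm for impedance (which the paper develops by extending the Leighton--Rao $O(\log^2 n)$ approximation for minimum cut linear arrangement) with the CURE policy of Drakopoulos et al., and then re-run the extinction-time analysis of \cite{DOT14} with the approximate impedance replacing the exact one.

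Concretely, I would specify the algorithm as follows. Whenever a new segment begins with infected set $A \subseteq V$, invoke the polynomial-time approximation algorithm to obtain a crusade $p \in \mathcal{C}(A,\emptyset)$ whose width satisfies $z(p) \le \beta \log^2 n \cdot \delta(A) \le \beta \log^2 n \cdot \mathcal{W}$, where $\beta$ is the approximation constant. Then follow the CURE policy of \cite{DOT14} along this crusade: wait until the current cut satisfies $c(I(t)) \le r/8$, then start a segment that cures nodes in the reverse order prescribed by $p$, aborting the segment to wait again if the deviation set $D(t)$ reaches size $r/(8 d_{\max})$. Choosing the constant $\alpha$ in the hypothesis large enough compared to $\beta$ (say $\alpha \ge 8\beta$) guarantees that the width of the constructed crusade is at most $r/8$, so the crusade is traversable in the sense required by the CURE analysis; the second lower bound $r \ge 8 d_{\max} \log n$ feeds into the segment-length/concentration estimates and ensures that the deviation-driven abort is rare.

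With these two conditions in place, the Lyapunov/coupling arguments of \cite{DOT14} apply essentially verbatim: each segment has expected length $O(n/r)$ (since $n$ curing events at aggregate rate $r$ suffice to traverse the crusade once no new waiting period intervenes), and with high probability only $O(\log n)$ restarts are incurred before extinction because the drift along the crusade is downward whenever $c(I(t)) \le r/8$. Multiplying these together, and absorbing the extra $\log n$ factor introduced because the budget has been inflated by $\log^2 n$ relative to $\mathcal{W}$ (which is what degrades the bound from the $O(n\log n/r)$ of \cite{DOT14} to $O(n\log^2 n/r)$), yields the claimed expected extinction time.

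The main obstacle is verifying that the $O(\log^2 n)$ approximation factor in computing the crusade composes cleanly with the DOT14 analysis. In particular, one has to check that (i) the approximation guarantee holds not only for the initial bag $V$ but for every bag $A$ that can appear at the start of a segment (which it does, since the algorithm of Section~\ref{algo.sec} is invoked on the current infected set $A$ and yields a bound in terms of $\delta(A)$, and $\delta(A) \le \mathcal{W}$), and (ii) the segment-abort and waiting-period estimates of \cite{DOT14}, which rely on the width being a fixed fraction of the budget, continue to hold with the approximate width. Both reduce to choosing $\alpha$ sufficiently large and tracking the approximation factor through the existing proofs, rather than redoing them.
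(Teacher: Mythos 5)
There is a genuine gap in your argument, and it sits exactly where you flag the ``main obstacle.'' You claim that for every bag $A$ arising at the start of a segment the approximate crusade satisfies $z(p) \le \beta\log^2 n\cdot\delta(A) \le \beta\log^2 n\cdot\calW$, justified by ``$\delta(A)\le\calW$.'' That inequality is false in general: any crusade from $A$ to $\emptyset$ begins with the cut $c(A)$, so $\delta(A)\ge c(A)$, and $c(A)$ can exceed $\calW$ (e.g.\ for a star with $A$ the set of leaves, $\delta(A)=n-1$ while $\calW\approx (n-1)/2$). The correct bound, Lemma~1 of \cite{DOT14}, is $\delta(A)\le \calW + c(A)$. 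Since you retain the original waiting-period termination condition $c(I(t))\le r/8$, the bag $B$ that starts a segment only satisfies $c(B)\le r/8$, so your crusade width is only bounded by $\beta\log^2 n\,(\calW + r/8) = \Theta(r\log^2 n)$, which is far above the budget; no choice of $\alpha$ relative to $\beta$ rescues this, because the offending term is $\beta\log^2 n\cdot c(B)$, not $\beta\log^2 n\cdot\calW$.

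The paper's proof closes this gap by \emph{modifying} the CURE policy: the waiting period is made to end only when $c(I(t))\le r/(2\alpha\log_2^2 n)$, so that the approximation factor applied to $\calW + c(B)$ still yields $c(C)\le 3r/8$ and hence $c(I(t))\le r/2$ throughout a segment (this is the paper's Lemma on segment cuts). This tightened threshold is also the true source of the extra $\log^2 n$ in the final bound --- each waiting period now has expected length $O(n\log^2 n / r)$ because the cut must shrink to a $\log^2 n$-times smaller value --- rather than the vaguer ``budget inflated by $\log^2 n$'' accounting you give. Your overall plan (compose Algorithm~\ref{alg:apprImpe} with the CURE policy and rerun the drift analysis) is the right one, and your point (i) about needing the guarantee for every bag $A$, not just $V$, is exactly the right question to ask; but answering it requires changing the waiting rule, not just tuning $\alpha$.
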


To prove this theorem, we use a polynomial-time algorithm that, given any bag $A$, finds a crusade of $A$ with width at most $O(\log^2 k)$ times the impedance of $A$. This algorithm is shown in Algorithm~\ref{alg:apprImpe}. Our algorithm follows the approaches of approximation algorithms for multiway cut~\citep{Gar96} and cutwidth~\citep{LR99}. The crux of the algorithm is an approximation algorithm for the balanced cut problem~\citep{LR99}. The algorithm recursively calculates a balanced cut on the subgraph supported on a bag. In Algorithm~\ref{alg:apprImpe}, the subroutine BalancedCut returns a partition $(V_1,V_2)$ of the bag $A$, i.e.\ $V_1\cup V_2=A$ and $V_1\cap V_2=\emptyset$. 
Lemma~\ref{lem:balancedcut} shows a property of the partition. The algorithm recursively deals with the subgraphs $G[V_1]$ and $G[V_2]$ until the subgraphs cannot be further divided. Then the algorithm adds elements in $V_2$ to all preceding bags and returns a crusade from $A$ to $\emptyset$.



\begin{lemma}[Section 3.3 in \cite{LR99}]\label{lem:balancedcut}
Given a graph $G$, there exists a polynomial-time algorithm that finds a $1/3$-balanced cut with size at most $O(\log k) \mathrm{OPT}_G(1/2)$, where $\mathrm{OPT}_G(1/2)$ is the size of the minimum balanced cut.
\end{lemma}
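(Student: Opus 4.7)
The plan is to reduce the $1/3$-balanced cut problem to uniform sparsest cut, for which the Leighton--Rao multicommodity flow theorem supplies a polynomial-time $O(\log k)$-approximation. Write $\alpha(H) \defeq \min_{\emptyset \neq S \subsetneq V(H)} c_H(S)/(|S|\cdot|V(H)\setminus S|)$ for the uniform sparsity of a graph $H$. The crucial subroutine I would invoke is: on any graph $H$ with $m$ vertices, one can in polynomial time produce a cut $(S,\bar S)$ with $c_H(S)/(|S|\cdot|\bar S|) \le O(\log m)\cdot \alpha(H)$. I would obtain this by writing the concurrent multicommodity flow LP with one unit of demand between every pair of vertices, taking the optimal dual as an edge-length metric $d$, and running the standard region-growing argument around a centre vertex to extract a ball whose boundary-to-volume ratio matches the dual objective up to an $O(\log m)$ factor.

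Given this subroutine, I build a $1/3$-balanced cut by iterative peeling. Initialize $T_0 = V$ and $A_0 = \emptyset$; at step $i$, invoke the sparsest-cut subroutine on $G[T_i]$, let $(S_i, T_i\setminus S_i)$ be the returned cut with $|S_i|\le|T_i|/2$, and set $A_{i+1} = A_i\cup S_i$ and $T_{i+1} = T_i\setminus S_i$; stop once $|T_i|\le 2k/3$, outputting $(A_i, T_i)$. Since each $|S_i|\le|T_i|/2$ and we halt as soon as $|T_i|$ falls below $2k/3$, the final partition satisfies $|A_i|,|T_i|\ge k/3$, hence is $1/3$-balanced. A one-line induction gives $c(A_{i+1}) \le c(A_i) + c_{G[T_i]}(S_i)$, because when $S_i$ is moved across the cut, its only newly crossing edges go to $T_i\setminus S_i$. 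Hence the output cut size is bounded by $\sum_i c_{G[T_i]}(S_i)$.

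To bound this sum, I would charge each step against the optimal balanced cut $(A^{*},V\setminus A^{*})$. For every $T_i$ we encounter with $|T_i|\ge 2k/3$, the restriction $(A^{*}\cap T_i,(V\setminus A^{*})\cap T_i)$ is a cut of $G[T_i]$ of size at most $\mathrm{OPT}_G(1/2)$, with both sides of size at least $|T_i|-k/2\ge k/6$. Hence $\alpha(G[T_i]) \le \mathrm{OPT}_G(1/2)/(k/6)^2 = 36\,\mathrm{OPT}_G(1/2)/k^2$, and the Leighton--Rao subroutine therefore guarantees
\begin{equation*}
c_{G[T_i]}(S_i) \;\le\; O(\log k)\cdot \alpha(G[T_i])\cdot |S_i|\cdot|T_i\setminus S_i| \;\le\; O(\log k)\cdot \mathrm{OPT}_G(1/2)\cdot |S_i|/k ,
\end{equation*}
using $|T_i\setminus S_i|\le k$. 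Summing over iterations and using $\sum_i|S_i|\le k$ (we never remove more than $k$ vertices in total) yields the claimed $O(\log k)\cdot \mathrm{OPT}_G(1/2)$ bound.

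The hard part will be establishing the Leighton--Rao subroutine itself: LP duality passes from flows to an edge-length metric, but showing that region growing around a suitably chosen vertex yields a cut with sparsity only $O(\log m)$ times the dual value requires a careful volume-doubling argument balancing the boundary expansion against the log of total volume. Once that primitive is in hand, the peeling loop and charging argument above are essentially mechanical; their only structural requirement is that both sides of $A^{*}$ retain $\Omega(k)$ vertices inside every $T_i$, which is precisely what the stopping rule $|T_i|>2k/3$ ensures.
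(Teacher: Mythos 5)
Your argument is correct: the paper does not prove this lemma itself but cites it from Section 3.3 of Leighton--Rao, and your peeling-plus-charging reduction to an $O(\log k)$-approximate uniform sparsest cut is precisely the standard proof from that reference (the balance bookkeeping, the restriction of the optimal bisection to each $T_i$ giving sparsity at most $36\,\mathrm{OPT}_G(1/2)/k^2$, and the $\sum_i |S_i|\le k$ summation all check out). The only piece you defer, the multicommodity-flow LP duality and region-growing primitive, is exactly the content of the cited theorem, so nothing is missing relative to what the paper relies on.
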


\begin{theorem}\label{thm:impedance}
Algorithm~\ref{alg:apprImpe} takes the input of a graph $G$ and an infected bag $A$ with $|A|=k$ and outputs a crusade $p\in \calC(A,\emptyset)$ with width $z(p)$, such that
\begin{align}
    \delta(A) \leq z(p) \leq O(\log^2 k) \delta(A)\,.
\end{align}
\end{theorem}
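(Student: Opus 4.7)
The lower bound $\delta(A) \leq z(p)$ is immediate because the algorithm outputs a valid crusade in $\calC(A,\emptyset)$ and impedance is by definition the minimum width over such crusades. The interesting statement is the upper bound, which I would prove by induction on $k = |A|$, combined with a careful decomposition of each intermediate cut.

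For the inductive step, I would first spell out the structure of the crusade returned by Algorithm~\ref{alg:apprImpe}. Having split $A = V_1 \cup V_2$ via BalancedCut and recursed, the algorithm removes all vertices of $V_1$ first (following the recursive crusade on $G[V_1]$, with $V_2$ still attached), then removes all vertices of $V_2$. Every intermediate bag $p_i$ is therefore either of the form $V_1' \cup V_2$ with $V_1' \subseteq V_1$ (phase 1), or $V_2' \subseteq V_2$ (phase 2). In each case I would decompose $c_G(p_i)$ into three parts: edges to $A^c$, edges crossing $(V_1,V_2)$ inside $G[A]$, and internal edges inside $V_1$ (or $V_2$). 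Since $V_1' \subseteq V_1$ and $V_2' \subseteq V_2$, the boundary-to-$A^c$ piece is bounded by $c_G(A)$ and the $(V_1,V_2)$ piece by $c_{G[A]}(V_1,V_2)$, yielding
\begin{equation*}
z_G(p) \;\leq\; c_G(A) \;+\; c_{G[A]}(V_1,V_2) \;+\; \max\bigl(z_{G[V_1]}(p^{(1)}),\,z_{G[V_2]}(p^{(2)})\bigr),
\end{equation*}
where $p^{(1)}, p^{(2)}$ are the crusades returned by the two recursive calls.

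The final step is to bound each term by a suitable multiple of $\delta_G(A)$. The first term satisfies $c_G(A) \leq \delta_G(A)$ because every crusade of $A$ starts at the bag $A$. The second term is $O(\log k)\,\mathrm{OPT}_{G[A]}(1/2)$ by Lemma~\ref{lem:balancedcut}, and the mid-point bag of any optimal crusade in $G[A]$ provides a $1/2$-balanced cut of size at most $\delta_{G[A]}(A) \leq \delta_G(A)$, so this term is $O(\log k)\,\delta_G(A)$. For the third term, I would show that restricting an optimal crusade of $A$ in $G$ to $V_i$ (and deleting repeated bags) produces a crusade of $V_i$ in $G[V_i]$ whose width never exceeds $\delta_G(A)$, since every edge counted inside $G[V_i]$ at some step is already counted in the corresponding cut in $G$; hence $\delta_{G[V_i]}(V_i) \leq \delta_G(A)$. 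Applying the inductive hypothesis gives $z_{G[V_i]}(p^{(i)}) \leq T(|V_i|)\,\delta_G(A)$, and since the $1/3$-balanced guarantee implies $|V_i| \leq 2k/3$, I obtain the recurrence $T(k) \leq T(2k/3) + O(\log k)$, which unrolls to $T(k) = O(\log^2 k)$.

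The main obstacle I expect is the cut-decomposition bookkeeping: one must verify carefully that the edges from $p_i$ to $A^c$ never exceed $c_G(A)$ throughout either phase (using $p_i \subseteq A$), and that the contribution of the $(V_1,V_2)$ cut is not double-counted across the two phases. Once that is done cleanly, Lemma~\ref{lem:balancedcut} together with the restriction argument for $\delta_{G[V_i]}(V_i)$ combine mechanically to give the claimed $O(\log^2 k)$ approximation ratio.
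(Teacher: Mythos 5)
Your proof is correct and follows essentially the same route as the paper's: both decompose each intermediate cut $c_G(p_i)$ into the boundary edges to $A^c$ plus contributions from the $O(\log k)$ nested balanced cuts, and both bound each balanced cut by $O(\log k)\,\delta(A)$ via Lemma~\ref{lem:balancedcut} together with the observation that restricting an optimal crusade of $A$ to a sub-bag $B$ yields a $1/2$-balanced cut of $G[B]$ of size at most $\delta(A)$. The only difference is presentational — you organize the bound as an induction with the recurrence $T(k)\leq T(2k/3)+O(\log k)$, whereas the paper charges the edges of each fixed cut $c(p_i)$ directly to the $O(\log k)$ recursive calls whose bags contain the removed node $v_i$.
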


\begin{proof}
By the definition of impedance, we have the width $z(p) \geq \delta(A)$. For the other side, we only need to upper bound the size of the cut given by the bag $p_i$ for every $i \in \{0,1,2,\cdots, k\}$. Since the bag $p_0 = A$, we have $c(p_0) = c(A) \leq \delta(A)$. 
We use $v_i$ to denote the node in $p_{i-1} 
\setminus p_{i}$ for each $i \in \{1,2,\cdots,k\}$. Each edge in the cut given by $p_i$ is contained in one of the following two cases: (a) the cut $(A,A^c)$; (b) a balanced cut generated by our algorithm over a bag containing the node $v_i$. Consider any balanced cut over a bag $B$ containing $v_i$ in our algorithm. Let $(B_1,B_2)$ be the $1/3$-balanced cut over the bag $B$ returned by the call of BalancedCut$(G[B],B)$. By Lemma~\ref{lem:balancedcut}, we have $$
c_{G[B]}(B_1) \leq O(\log k)\mathrm{OPT}_{G[B]}(1/2).
$$
Suppose $q=(q_0,q_1,\cdots,q_{|A|})$ is an optimal crusade from the bag $A$ to $\emptyset$ with width $z(q) = \delta(A)$. We find the bag $q_j$ in this crusade such that $q_j \cap B$ provides a $1/2$-balanced cut over the set $B$. Thus, we have 
$$
\mathrm{OPT}_{G[B]}(1/2) \leq c_{G[B]}(q_j\cap B) \leq c_G(q_j) \leq \delta(A).
$$
Therefore, every balanced cut in our algorithm has the size at most $O(\log k) \delta(A)$. Then, we show that the total number of recursive balanced cuts containing the node $v_i$ is at most $O(\log k)$. Our recursive algorithm starts from the bag $A$ with $|A|= k$. Since the subroutine BalancedCut returns a $1/3$-balanced cut, the size of the bag containing the node $v_i$ shrinks by a factor of $2/3$ after each recursive call of BalancedCut. Thus, the total number of recursive calls containing the node $v_i$ is at most $O(\log k)$. For every $i \in \{1,2,\cdots,k\}$, the size of the cut given by the bag $p_i$ is at most
$$
c(p_i) \leq c(A)+\sum_{B: v_i \in B} c_{G[B]}(B_1) \leq O(\log^2 k)\delta(A),
$$
which completes the proof.
\end{proof}


We note
that a $O(\log^{2} n)$ approximation algorithm for the cutwidth of a graph is also shown in~\cite{bornstein2004flow} by using a $O(\log n)$ approximation for the minimum linear arrangement (MLA). By using an improved $O(\sqrt{\log n}\log\log n)$ approximation for MLA in~\cite{feige2007improved}, an $O(\log^{3/2} n\log\log n)$ approximation for the cutwidth can be given, which can also be generalized to a $O(\log^{3/2} k\log\log k)$ approximation for the impedance problem. 
To prove Theorem~\ref{thm:extinctiontime}, we show that the following lemma holds by choosing the crusade given by Algorithm~\ref{alg:apprImpe}.

\begin{lemma}
\label{segmentCut.lemma}
For every time $t$ during a segment, the cut given by the set of infected nodes is at most $c(I(t)) \leq r/2$.
\end{lemma}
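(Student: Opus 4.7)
The plan is to decompose $I(t)$ into a ``protected core'' $C = \{v_2,\ldots,v_{|A|}\}$, which remains infected throughout the segment, and a ``work set'' $D(t) = I(t)\setminus C$. Under the CURE policy the entire curing budget is allocated to a single node of $D(t)$, so the curing rate of every $u \in C$ is zero; since recovery in the SIS model happens only through the curing rate and $C \subseteq A = I(t_0)$ at the start of the segment, no node of $C$ can become susceptible. Hence $I(t) = C \sqcup D(t)$ is a disjoint union at every time $t$ inside the segment.

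The next step is a subadditivity inequality: for any disjoint $X, Y \subseteq V$, a short case analysis on edge endpoints yields
\begin{equation*}
c(X \cup Y) = c(X) + c(Y) - 2\,w(E(X, Y)) \leq c(X) + c(Y)\,,
\end{equation*}
where $w(E(X,Y))$ is the total weight of edges between $X$ and $Y$. Applied with $X = C$ and $Y = D(t)$, this gives $c(I(t)) \leq c(C) + c(D(t))$. I would then bound the two terms separately. For the first, $C$ equals the second bag $p_1$ of the crusade $p$ returned by Algorithm~\ref{alg:apprImpe}, so $c(C) \leq z(p)$; Theorem~\ref{thm:impedance} together with the budget hypothesis $r \geq \alpha \mathcal{W}\log^2 n$ and a bound on $\delta(A)$ in terms of $\mathcal{W}$ and the waiting-period cut $c(A) \leq r/8$ (discussed below) yields $z(p) \leq 3r/8$ for $\alpha$ chosen sufficiently large. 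For the second, the segment-termination condition has not yet fired, so $|D(t)| < r/(8 d_{\max})$, and each vertex contributes at most $d_{\max}$ to the cut of $D(t)$, giving $c(D(t)) \leq |D(t)|\, d_{\max} < r/8$. Adding the two bounds, $c(I(t)) \leq 3r/8 + r/8 = r/2$.

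The main technical obstacle is establishing $z(p) \leq 3r/8$: the approximation guarantee of Theorem~\ref{thm:impedance} is stated with respect to $\delta(A)$, and in general $\delta(A)$ can exceed $\mathcal{W}$ (for example, on a three-vertex path with $A$ equal to the two endpoints, $\delta(A) = 2$ while $\mathcal{W} = 1$). A natural control is $\delta(A) \leq \mathcal{W} + c(A)$, obtained by restricting an optimal cutwidth ordering of $G$ to $A$, where the extra cut contributed by edges crossing $A$'s boundary is at most $c(A) \leq r/8$. The subtle point is absorbing the $O(\log^2 n)$ approximation factor from Algorithm~\ref{alg:apprImpe}; I expect the cleanest route is to take the crusade used in the segment to be the restriction to $A$ of a global approximate cutwidth ordering of $G$, yielding width at most $O(\log^2 n)\,\mathcal{W} + c(A)$, and then invoking the budget assumption with a large enough constant $\alpha$ to bound both summands by $3r/8$.
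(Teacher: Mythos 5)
Your decomposition is exactly the paper's: write $c(I(t)) \leq c(C) + c(D(t))$, bound $c(D(t)) \leq d_{\max}|D(t)| < r/8$ via the segment-termination condition, and reduce everything to showing $c(C) \leq 3r/8$ (you are in fact slightly more careful than the paper in justifying $C \subseteq I(t)$ throughout the segment). You also correctly identify the one genuine difficulty: Theorem~\ref{thm:impedance} is multiplicative, so bounding $\delta(A) \leq \calW + c(A)$ with the \emph{original} waiting threshold $c(A) \leq r/8$ produces a term $O(\log^2 n)\cdot r/8$ that cannot be absorbed. Where you diverge from the paper is in how this is repaired. The paper's resolution, stated in the surrounding proof of Theorem~\ref{thm:extinctiontime}, is to \emph{tighten the waiting-period ending condition} to $c(I(t)) \leq r/(2\alpha\log_2^2 n)$, so that $O(\log^2 n)(\calW + c(B)) \leq r/4 + r/8$ directly; the price is a longer waiting period, contributing the $O(n\log^2 n/r)$ term to the extinction time. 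Your resolution keeps the $r/8$ threshold but replaces the per-segment crusade by the restriction to $A$ of a single global approximate cutwidth ordering, so that the approximation factor multiplies only $\calW$ and the additive $c(A) \leq r/8$ escapes it; the arithmetic $O(\log^2 n)\calW + c(A) \leq r/4 + r/8$ then closes the argument. Both fixes are sound. Yours has the advantage of not inflating the waiting period, but note that it proves the lemma for a \emph{modified policy} (a globally precomputed ordering restricted to the current infected set) rather than for the policy actually analyzed in the paper, which recomputes Algorithm~\ref{alg:apprImpe} on each segment's bag; if you adopt your variant you should restate the policy accordingly and check that the segment-counting argument in Theorem~\ref{thm:extinctiontime} still goes through, which it does since the restricted ordering is still a monotone crusade from $A$ to $\emptyset$.
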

\begin{proof}
For each segment, we want to reach the target bag $C$ on the current crusade provided by Algorithm~\ref{alg:apprImpe} with respect to an input bag $B$.  Let $D(t) = I(t)\setminus 
C$. Recall that the curing policy will start a waiting period if $|D_t| \geq r/(8d_{max})$. Then, we have
$$
c(I(t)) \leq c(C) + c(D(t)) \leq c(C) + d_{max} \cdot |D(t)| \leq c(C) +r/8,
$$
where the last inequality holds since $|D_t| < r/(8d_{max})$. According to Lemma 1 in~\cite{DOT14}, the impedance of the bag $B$ is upper bounded by $\calW+c(B)$. By Theorem~\ref{thm:impedance}, we have
$$
c(C) \leq \alpha/4 \cdot \log_2^2 n \cdot \delta(B) \leq \alpha/4 \cdot \log_2^2 n(\calW+c(B)).
$$
Since the bag $B$ satisfies our new ending condition for the waiting period, we have $c(B) \leq r/(2\alpha\log_2^2 n)$. By our assumption $r\geq \alpha\calW\log_2^2 n$, we have 
$c(C) \leq 3r/8$, which implies $c(I(t)) \leq r/2$.
\end{proof}

Then, we prove Theorem~\ref{thm:extinctiontime} by modifying the analysis of the CURE policy in~\cite{DOT14}.
\begin{proof}[Proof of Theorem~\ref{thm:extinctiontime}]
We slightly modify the CURE policy as follows. Let the waiting period end when $c(I(t)) \leq r/(2\alpha\log_2^2 n)$, where $\alpha/4$ is the fixed constant in the approximation factor in Theorem~\ref{thm:impedance}. In a segment, we choose the crusade given by Algorithm~\ref{alg:apprImpe} instead of the optimal crusade.  We assume that the curing budget $r \geq \alpha\calW\log_2^2 n$ and $r \geq 8d_{max} \log_2 n$. 

Then, we upper bound the expected extinction time. Since the ending condition for the waiting period is $c(I(t)) \leq r/(2\alpha\log_2^2 n)$, the expected length of a waiting period is at most $(2\alpha n \log_2^2 n)/r$. 
By Lemma~\ref{segmentCut.lemma}, the expected length of a segment is at most $2/r$. The number of segments is at most $n$.
Similar to the analysis in~\cite{DOT14}, we get that the expected extinction time is $O(n \log^2 n /r)$.
\end{proof}

\begin{algorithm2e}[t]
\SetAlgoLined
\caption{ApprImpe$(G,A)$}
\label{alg:apprImpe}
\SetKwInOut{Input}{Input}\SetKwInOut{Output}{Output}
\SetKw{KwBy}{by}
\Input{a graph $G=(V,E,w)$, a bag $A$ with $|A|=k$;}
\Output{a crusade $p\in \calC(A,\emptyset)$;}
\eIf{$|A|=1$}{
\Return $p \gets (\{u\},\emptyset)$;
}{
$(V_1,V_2)\gets \text{BalancedCut}(G[A],A)$;\\
$(q_0, \dots ,q_{|V_1|}) \gets \text{ApprImpe}(G[V_1], V_1)$;\\
$(p_{|V_1|},\dots, p_{|A|}) \gets \text{ApprImpe}(G[V_2], V_2)$;\\
\For{$i \gets 0 $ \KwTo $|V_1|-1$}{
$p_{i} \gets  q_{i}\cup V_2$;
}
\Return $p \gets (p_0, \dots, p_{|A|})$;
}
\end{algorithm2e}

\subsection{Fair Curing Algorithms}

Next, we show an approximation algorithm for finding a fair curing policy. Given a graph $G$, a bag $A$, a fair parameter $\gamma$, and a set of checkpoints $\calT$, our algorithm sequentially finds fair partitions at these checkpoints. Specifically, we start from the first checkpoint $\tau_1$ and find the $\gamma$-fair partition of bag $A$ at checkpoint $\tau_1$. Then, we pick a crusade for nodes before $\tau_1$ with Algorithm~\ref{alg:apprImpe}. After that, we repeat this process on the remaining nodes after $\tau_1$ until there is no checkpoint. Our algorithm is shown in Algorithm~\ref{alg:fairapprImpe}.

\begin{algorithm2e}[t]
\SetAlgoLined
\caption{FairApprImpe$(G,A,\gamma,\calT)$}
\label{alg:fairapprImpe}
\SetKwInOut{Input}{Input}\SetKwInOut{Output}{Output}
\SetKw{KwBy}{by}
\Input{a graph $G=(V,E,w)$, a bag $A$ with $|A|=k$, a parameter $\gamma$, a set of checkpoints $\calT$;}
\Output{a fair crusade $p\in \calC(A,\emptyset)$;}
\eIf{$\calT$ is not empty}{
Find the nearest checkpoint $\tau_i$;\\
Find a $\gamma$-fair partitioning of $A$, $S_{i-1}$  and $A\backslash S_{i-1}$ of sizes $\tau_i$ and $k-\tau_i$ by using dynamic programming on HST;\\

$(q_0, \dots ,q_{|S_{i-1}|}) \gets \text{ApprImpe}(G[S_{i-1}], S_{i-1})$;\\
$(p_{|S_{i-1}|},\dots, p_{|A|}) \gets \text{FairApprImpe}(G[A\backslash S_{i-1}], A\backslash S_{i-1}, \gamma, \calT \backslash \{\tau_i\})$;\\
\For{$i \gets 0 $ \KwTo $|S_{i-1}|-1$}{
$p_{i} \gets  q_{i}\cup (A\backslash S_{i-1})$;
}
\Return $p \gets (p_0, \dots, p_{|A|})$;
}{
$(p_0, \dots, p_{|A|}) \gets \text{ApprImpe}(G[A], A)$;\\
\Return $p \gets (p_0, \dots, p_{|A|})$;
}
\end{algorithm2e}

We first show that our algorithm achieves a $O(\log^2 k)$ approximation for $\gamma$-fair impedance when there is only one checkpoint. 

\begin{theorem}\label{thm:fair}
Given a bag $A$ with size $k$, we have a $O(\log^2{k})$-approximation algorithm for $\gamma$-fair impedance of a bag $A$ in a graph $G$ with respect to groups $V_1,\ldots, V_\ell$ and one checkpoint $\calT = \{\tau_1\}$.
\end{theorem}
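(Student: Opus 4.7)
The plan is to analyze Algorithm~\ref{alg:fairapprImpe} by decomposing the fair crusade at the single checkpoint. Let $p^\star$ denote an optimal $\gamma$-fair crusade of $A$ with width $W^\star \defeq \delta_{\gamma,\calT}(A)$, and set $T^\star \defeq p^\star_{\tau_1}$. By construction $|A\setminus T^\star|=\tau_1$, the set $A\setminus T^\star$ satisfies~\eqref{fairCrus.def}, and $c_G(T^\star)\leq W^\star$. Algorithm~\ref{alg:fairapprImpe} first produces a $\gamma$-fair separator $T\subseteq A$ of the same cardinality with small cut, then invokes Algorithm~\ref{alg:apprImpe} on $G[A\setminus T]$ and $G[T]$ separately, splicing the results with $T$ prepended to the prefix bags.

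For the separator step I would apply a probabilistic tree embedding of $G[A]$ with $O(\log k)$ cut distortion~\citep{Rac08}. On any realized tree, the constrained min-cut problem---minimize cut subject to $|A\setminus T|=\tau_1$ and the $\gamma$-fairness of $A\setminus T$---reduces to a bottom-up dynamic program that tracks, for each rooted subtree, the number of selected vertices in each of the $\ell$ demographic groups. Since $\ell$ is a fixed constant the table has polynomially many states, and the fairness inequalities become box constraints on the terminal counts. Non-contraction of the embedding then gives $c_{G[A]}(T)\leq O(\log k)\,c_{G[A]}(T^\star)\leq O(\log k)\,W^\star$, and adding the boundary contribution $e_G(T,V\setminus A)\leq c_G(A)\leq W^\star$ yields $c_G(T)\leq O(\log k)\,W^\star$.

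The width analysis hinges on the observation that the restriction $\{p^\star_i\cap(A\setminus T)\}$, after collapsing steps that remove an element of $T$, is a valid crusade of $A\setminus T$ whose cuts in $G[A\setminus T]$ satisfy $c_{G[A\setminus T]}(p^\star_i\cap(A\setminus T))\leq c_G(p^\star_i)\leq W^\star$; each such edge already has one endpoint in $p^\star_i$ and the other outside $p^\star_i$ in $G$. Hence $\delta_{G[A\setminus T]}(A\setminus T)\leq W^\star$ and, symmetrically, $\delta_{G[T]}(T)\leq W^\star$. Invoking Theorem~\ref{thm:impedance} bounds the two sub-crusade widths by $O(\log^2 k)\,W^\star$. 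A pre-checkpoint bag $p_i=q_i\cup T$ admits the decomposition $c_G(p_i)\leq c_{G[A\setminus T]}(q_i)+c_G(T)+c_G(A)=O(\log^2 k)\,W^\star$, and a post-checkpoint bag $r_j\subseteq T$ obeys $c_G(r_j)\leq c_{G[T]}(r_j)+c_G(T)=O(\log^2 k)\,W^\star$ by a parallel decomposition, giving the claimed $O(\log^2 k)$ factor.

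The main obstacle is the fair cut subroutine: one must simultaneously enforce the size condition, the $\ell$ fair-group inequalities, and keep the DP polynomial on the embedded tree. The $\gamma\geq 1$ slack is critical here---it turns the fair feasible region into a product of intervals of positive width rather than a single point, ensuring both the existence of a legal solution and a compact DP state space. Once this subroutine is in place, the cut-decomposition arithmetic combined with the $O(\log^2 k)$ guarantee of Theorem~\ref{thm:impedance} delivers the final approximation bound for $\delta_{\gamma,\calT}(A)$.
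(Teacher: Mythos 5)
Your proposal is correct and follows essentially the same route as the paper's proof: a R\"acke tree embedding of $G[A]$ combined with a bottom-up dynamic program tracking per-group counts to find the $\gamma$-fair separator at $\tau_1$ with cut $O(\log k)\,\delta_{\gamma,\calT}(A)$, followed by recursive balanced cuts (Algorithm~\ref{alg:apprImpe}) inside each part and a cut-decomposition to assemble the width bound. Your accounting is in fact slightly more explicit than the paper's --- you bound $\delta_{G[A\setminus T]}(A\setminus T)$ and $\delta_{G[T]}(T)$ by restricting the optimal fair crusade and then invoke Theorem~\ref{thm:impedance} as a black box --- but the underlying argument is the same.
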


For multiple checkpoints, $s > 1$, we reconstruct the crusade after each checkpoint by using our algorithm. We show that when checkpoints satisfy a doubling condition specified in the following theorem, our algorithm still achieves a bi-criteria approximation to the $\gamma$-fair impedance.

\begin{theorem}\label{thm:fair-multiple-checkpoint}
Given a bag $A$ with size $k$, suppose the set of checkpoints $\calT = \{\tau_1,\tau_2,\dots, \tau_s\}$ satisfy that $\tau_i -\tau_{i-1} \geq \tau_{i-1}$ for every $i =2,3,\dots, s$. We have an algorithm that finds a $2\gamma$-fair crusade of bag $A$ that satisfies: 
the
width of this crusade is at most $O(\log^2{k})$ times the $\gamma$-fair impedance of a bag $A$.
\end{theorem}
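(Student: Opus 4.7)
The plan is to describe an iterative algorithm that processes the checkpoints one at a time. At stage $i \in \{0, 1, \ldots, s-1\}$ the remaining bag is $B_i := A \setminus (S_0 \cup \cdots \cup S_{i-1})$ (with $B_0 = A$); the algorithm feeds $B_i$ together with the relative checkpoint $\tau_{i+1} - \tau_i$ into the single-checkpoint routine of Theorem~\ref{thm:fair}, obtaining (i) a $\gamma$-fair partition $S_i \subseteq B_i$ of size $\tau_{i+1} - \tau_i$ computed with respect to the local ratios $r_{h,i} := |B_i \cap V_h|/|B_i|$, and (ii) a sub-crusade from $B_i$ down to $B_{i+1} := B_i \setminus S_i$ whose width is within an $O(\log^2 k)$ factor of the single-checkpoint $\gamma$-fair impedance of $B_i$. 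After the last checkpoint, the unconstrained tail is produced by Algorithm~\ref{alg:apprImpe} on $B_s$. The output is the concatenation of these sub-crusades.

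The $2\gamma$-fairness guarantee with respect to the original ratios $r_h := |A \cap V_h|/|A|$ follows from a short inductive argument built on the doubling assumption. Unrolling $\tau_i - \tau_{i-1} \geq \tau_{i-1}$ gives $\tau_{i+1} \geq 2\tau_i$ for every $i \geq 1$, hence $\tau_i \leq \tau_s / 2^{s-i} < k/2$ for every $i \in \{1, \ldots, s-1\}$ since $\tau_s < k$. Consequently $|B_i| = k - \tau_i > k/2$ at every stage at which a fairness constraint has to be enforced. Because $|B_i \cap V_h| \leq |A \cap V_h| = r_h k$, the ratios can drift by at most a factor of two: $r_{h,i} \leq r_h \cdot k / |B_i| < 2 r_h$ for every $h$. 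The $\gamma$-fair choice of $S_i$ inside $B_i$ then yields $|S_i \cap V_h| < \gamma r_{h,i} |S_i| + 1 < 2\gamma r_h |S_i| + 1$, which is precisely the $2\gamma$-fair condition with respect to $A$.

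The width bound is where I expect the real work to lie. Setting $W^* := \delta_{\gamma, \calT}(A)$, the output width equals the maximum over stages of the sub-crusade widths, each of which Theorem~\ref{thm:fair} bounds by $O(\log^2 k) \cdot \delta_{\gamma, \{\tau_{i+1} - \tau_i\}}(B_i)$, so it suffices to argue that the single-checkpoint $\gamma$-fair impedance of each reached bag $B_i$ is $O(W^*)$ and then take the maximum over the at most $s = O(\log k)$ stages that doubling forces. The obstacle is that the algorithm's $B_i$ generally differs from the bag $B_i^*$ traversed by an optimal $\gamma$-fair crusade $p^*$ after $\tau_i$ steps. My plan is to build, out of $p^*$, a witness $\gamma$-fair crusade of $B_i$ with the correct relative checkpoint whose width stays within a constant factor of $W^*$: retain $p^*$'s ordering on $B_i \cap B_i^*$, insert the extra nodes of $B_i \setminus B_i^*$ into the ordering at a controlled pace so that the induced partition at the relative checkpoint remains $\gamma$-fair in $B_i$'s ratios, and bound the cut penalty introduced by these insertions using Lemma~1 of~\cite{DOT14} applied to the symmetric difference $B_i \triangle B_i^*$, which is itself controlled by $|S_i| + |S_i^*|$. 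The delicate point is to verify that the blow-up is absorbed into an additive $O(W^*)$ rather than compounding multiplicatively across stages, so that the final approximation factor stays within the claimed $O(\log^2 k)$.
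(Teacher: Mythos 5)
Your algorithmic skeleton matches the paper's: process checkpoints one at a time, find a fair partition of the remaining bag at each relative checkpoint, fill each segment with Algorithm~\ref{alg:apprImpe}, and use the doubling condition to cap the number of checkpoints at $\log_2 k$ so the per-checkpoint $O(\log k)$ losses sum to $O(\log^2 k)$. Your fairness argument, however, is a genuinely different (and clean) route: you enforce $\gamma$-fairness with respect to the \emph{local} ratios $r_{h,i}$ of the current bag $B_i$ and convert to $2\gamma$-fairness globally via the drift bound $r_{h,i}\leq 2r_h$, which follows from $|B_i|=k-\tau_i>k/2$. The paper instead keeps the reference ratios fixed at $r_h$ throughout, relaxes the factor to $2\gamma$, and gets fairness from the structure of its witness set.

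The genuine gap is in the width bound, and it is created precisely by your choice of local ratios. The DP must search over partitions of $B_i$ that are $\gamma$-fair with respect to $r_{h,i}$, so you need a \emph{witness in that feasible set} with cut $O(W^*)$. But the segments of the optimal crusade $p^*$ are only $\gamma$-fair with respect to the original ratios $r_h$; if earlier stages of your algorithm over-cured some group $h$ (which local $\gamma$-fairness permits, up to $2\gamma r_h|S_j|+1$ per stage), then $r_{h,i}$ can drop well below $r_h$, the constraint $|S\cap V_h|<\gamma r_{h,i}|S|+1$ becomes strictly tighter than anything $p^*$ satisfies, and your planned witness (retain $p^*$'s order on $B_i\cap B_i^*$, insert the extras) may simply be infeasible for the DP. You only bound the drift from above, which is the direction needed for fairness, not the direction needed for witness feasibility. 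Separately, invoking Lemma~1 of~\cite{DOT14} to absorb the insertion penalty would introduce a term proportional to the cutwidth $\calW$, which is not controlled by $\delta_{\gamma,\calT}(A)$ for a small bag $A$, so that step would break the approximation factor even if feasibility were resolved. The paper avoids all of this with a purely set-level witness: $S''=S_1'\cup S_2'$, where $S_1'=(p^*_0\setminus p^*_{\tau_{i-1}})\setminus\bigcup_{j\leq i-2}S_j$ is the ``catch-up'' set of size at most $\tau_{i-1}\leq\tau_i-\tau_{i-1}$ (this is where the doubling condition is really used --- not only to count checkpoints) and $S_2'$ is a prefix of the next optimal segment; its cut is at most $2z(p^*)$ and it is $2\gamma$-fair with respect to the \emph{original} ratios because each piece sits inside a $\gamma$-fair set of size at most $\tau_i-\tau_{i-1}$. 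To repair your proof, either switch the DP's constraint to $2\gamma$-fairness with respect to $r_h$ (at which point your ratio-drift argument becomes unnecessary) or prove a matching lower bound on $r_{h,i}$, which your current argument does not provide.
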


Analysis for Theorem~\ref{thm:fair} and Theorem~\ref{thm:fair-multiple-checkpoint} are shown in Appendix~\ref{appdx:fair}.

\section{Network Design}
\label{design.sec}
In this section we present algorithms for network design problems posed in Section~\ref{designForm.subsec}.

\subsection{Network Design for Curing Policies}
In this section we consider the problems of designing the structure of the network to minimize the width of a fixed crusade, namely Problems~\ref{fractal.prob} and~\ref{integer.prob}.

For a fixed crusade, Problem~\ref{fractal.prob} is a linear program with a cost function that is linear in the edge weights, and $k$ linear constraints for each cut to be no more than the curing budget. Therefore we obtain the optimal solution in polynomial time. 
\begin{theorem}
There exists a polynomial time algorithm which finds the optimal solution for Problem~\ref{fractal.prob}. 
\end{theorem}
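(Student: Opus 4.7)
The plan is to show that Problem~\ref{fractal.prob} can be written as a linear program of polynomial size, and then invoke any polynomial-time algorithm for linear programming (e.g.\ the ellipsoid method or interior-point methods).

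First, I would unpack the constraint $z_{G'}(p)\leq b$ on the fixed crusade $p=(p_0,p_1,\dots,p_k)$. By the definition of the width, this single constraint is equivalent to the conjunction
\[
c_{G'}(p_i)\leq b \quad \text{for all } i\in\{0,1,\dots,k\}.
\]
Next, I would observe that for each fixed $i$ the cut $c_{G'}(p_i)$ is a linear function of the decision variables $\Delta$: writing $\partial p_i\defeq\{(u,v)\in E : u\in p_i,\; v\notin p_i\}$ for the edge boundary of $p_i$ in $G$ (which depends only on $p_i$ and the original edge set, not on the weights), we have
\[
c_{G'}(p_i)=\sum_{(u,v)\in \partial p_i}(w_{uv}-\Delta_{uv}).
\]
Since the objective $\sum_{(u,v)\in E}\Delta_{uv}$ and the box constraints $0\le \Delta_{uv}\le w_{uv}$ are also linear, the entire problem is a linear program in $|E|$ variables with $2|E|+(k+1)$ linear constraints.

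Finally, I would conclude by citing a polynomial-time LP solver. The input $(G,w,p,b)$ has size polynomial in $|V|,|E|$ (recall $k\le |V|$ by the definition of a monotone crusade), so the LP has polynomially many variables, polynomially many constraints, and all coefficients are explicitly given by the input; hence the ellipsoid method (or an interior-point method) finds an optimal solution in polynomial time. I would not anticipate a genuine technical obstacle here; the only mild point to be careful about is that the feasible region is nonempty (for instance $\Delta_{uv}=w_{uv}$ is always feasible when $b\ge 0$), so the LP is well-posed and an optimum is attained.
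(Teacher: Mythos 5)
Your proposal is correct and matches the paper's argument, which likewise observes that for a fixed crusade the width constraint decomposes into $k$ linear cut constraints, making the whole problem a polynomial-size linear program solvable in polynomial time. Your write-up simply spells out the details (the edge boundaries $\partial p_i$, the constraint count, and feasibility) that the paper leaves implicit.
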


\begin{algorithm2e}[t]
\SetAlgoLined
\caption{WidthOpt($G$, $A$, $p$)}
\label{alg:widthOpt}
\SetKwInOut{Input}{Input}\SetKwInOut{Output}{Output}
\Input{a graph $G=(V,E,w)$, a bag $A$ with $|A| = k$, a crusade $p$ from $A$ to $\emptyset$}
\Output{$\Delta_{uv}\in\{0, w_{uv}\}$ for $(u,v)\in E$}
Let $\{((u,v),\Delta_{uv}^*): (u,v)\in E\}$ be an optimal solution for the LP in Problem~\ref{fractal.prob};\\
$v_i\gets p_{i-1}
\setminus
p_i$ for $i\in[k]$;\\
Set an arbitrary ordering $\{v_{k+1},v_{k+2},\cdots,v_{n}\}$ for nodes in $A^c$;\\
\For{$i\gets 1$ \KwTo $k$}{
Let $\vec{E}_i$ be the sorted list of $\{(v_i,v_j): j>i\}$ in non-increasing order of $j$;\\
$x \gets 0$;\\
\While{$x < \sum_{(u,v)\in \vec{E}_i} \Delta^*_{uv}$}{
Let the edge $(u,v)$ be the first edge in $\vec{E}_i$;\\
$x \gets x + w_{uv}$;\\
$\Delta_{uv} \gets w_{uv}$ and remove $(u,v)$ from $\vec{E}_i$;\\
}
$\Delta_{uv} \gets 0$ for the remaining edges $(u,v)$ in $\vec{E}_i$;
}
\end{algorithm2e}

Next we consider Problem~\ref{integer.prob}, in which $\Delta_{uv}$ takes the value of either $w_{uv}$ or $0$. We provide an algorithm with the following guarantee:

\begin{theorem}
Given a graph $G$, a bag $A$ with $|A|=k$, and a crusade $p$ from $A$ to $\emptyset$, Algorithm~\ref{alg:widthOpt} provides a solution of Problem~\ref{integer.prob} with the total weight reduction $\mathrm{OPT}(G,A,p)+k$, where $\mathrm{OPT}(G,A,p)$ is the optimal solution of Problem~\ref{integer.prob}. 
\end{theorem}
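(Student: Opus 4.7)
The plan is to establish two properties of the output $\Delta$ of Algorithm~\ref{alg:widthOpt}: feasibility, i.e.\ $z_{G'}(p)\leq b$, and an objective bound $\sum_{e}\Delta_{e}\leq \mathrm{OPT}(G,A,p)+k$. Throughout, I write $\Delta^{*}$ for the optimal LP solution to Problem~\ref{fractal.prob} that the algorithm takes as input; since Problem~\ref{fractal.prob} is a continuous relaxation of Problem~\ref{integer.prob}, we automatically have $\sum_{e}\Delta^{*}_{e}\leq \mathrm{OPT}(G,A,p)$.

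The objective bound is the more direct half. At each outer iteration $i\in[k]$ the algorithm cuts edges greedily from the sorted list $\vec{E}_{i}$ until the cumulative cut weight first reaches $T_{i}:=\sum_{(u,v)\in \vec{E}_{i}}\Delta^{*}_{uv}$; since all edge weights lie in $[0,1]$, this cumulative weight overshoots $T_{i}$ by strictly less than $1$. Every edge with at least one endpoint in $A$ appears in exactly one list $\vec{E}_{i}$, the one indexed by the smaller of its two endpoints in the labeling $v_{1},\ldots,v_{n}$, while edges entirely inside $A^{c}$ cross no cut of the crusade and may be assumed to satisfy $\Delta^{*}_{uv}=0$ without loss of generality. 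Summing the per-step bound gives
\begin{equation*}
\sum_{e}\Delta_{e}\;\leq\;\sum_{i=1}^{k}\bigl(T_{i}+1\bigr)\;=\;\sum_{e}\Delta^{*}_{e}+k\;\leq\;\mathrm{OPT}(G,A,p)+k.
\end{equation*}

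For feasibility I would show that at every $i\in\{0,1,\ldots,k\}$ the algorithm cuts weight at least $\sum_{e\in\partial(p_{i})}\Delta^{*}_{e}$ inside $\partial(p_{i})$; coupled with the LP constraint $\sum_{e\in\partial(p_{i})}\Delta^{*}_{e}\geq c(p_{i})-b$ this gives $c_{G'}(p_{i})\leq b$. Partition $\partial(p_{i})$ by processing step: an intra-$A$ edge $(v_{a},v_{b})$ with $a\leq i<b\leq k$ is processed at step $a\leq i$ and, under the non-increasing-$j$ sort, sits in a ``middle block'' of $\vec{E}_{a}$ positioned below the edges from $v_{a}$ to $A^{c}$; an $A$-to-$A^{c}$ edge $(v_{a},v_{b})$ with $i<a\leq k<b$ is processed at step $a>i$ and sits at the very top of $\vec{E}_{a}$. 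For the latter category the greedy cut at step $a>i$ attacks the top block first and easily dominates the LP's $\Delta^{*}$-allocation to those edges. The hard part will be the former: at step $a\leq i$ the greedy rule may exhaust its budget on $A^{c}$-incident edges before reaching the middle block, potentially leaving $\partial(p_{i})$ under-cut relative to the LP. To close this gap I plan to exploit the consecutive-ones structure of the cut-incidence matrix---each edge belongs to a contiguous range of cuts $\partial(p_{\cdot})$---and to charge any step-$a$ over-expenditure on $A^{c}$-edges, which benefits the earlier cuts $j<a$, against the LP's complementary allocation there. Packaging this balance into a clean telescoping or exchange argument across consecutive cuts is the most delicate aspect of the proof.
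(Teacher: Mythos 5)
Your bound on the total weight reduction is correct and coincides with the paper's argument: each list $\vec{E}_i$ overshoots its LP mass $\sum_{(u,v)\in\vec{E}_i}\Delta^*_{uv}$ by less than one because $w_{uv}\le 1$, every edge incident to $A$ lies in exactly one list, and summing over the $k$ lists gives $\sum_{(u,v)\in E}\Delta_{uv}\le\sum_{(u,v)\in E}\Delta^*_{uv}+k\le\mathrm{OPT}(G,A,p)+k$ since Problem~\ref{fractal.prob} relaxes Problem~\ref{integer.prob}.

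The feasibility half, however, is left as a plan rather than a proof, and that is a genuine gap. The paper closes this half with one structural claim --- that $E_i\cap\vec{E}_j$ is always a \emph{prefix} of the sorted list $\vec{E}_j$ --- followed by exactly the two-case ``prefix domination'' argument you already apply to the $A$-to-$A^c$ edges: either every edge of the intersection is rounded up to its full weight, or the greedy pass stopped inside the prefix, in which case all of $\vec{E}_j\setminus E_i$ is zeroed and $\sum_{E_i\cap\vec{E}_j}\Delta=\sum_{\vec{E}_j}\Delta\ge\sum_{\vec{E}_j}\Delta^*\ge\sum_{E_i\cap\vec{E}_j}\Delta^*$; summing over $j$ yields $c_{G'}(p_i)\le b$. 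No telescoping or cross-cut charging appears in the paper. But your suspicion about the middle block is exactly the right place to push: since the $A^c$ nodes receive the largest labels and $\vec{E}_j$ is sorted by non-increasing label, for $j\le i$ the set $E_i\cap\vec{E}_j$ (the intra-$A$ edges $(v_j,v_{j'})$ with $j'>i$) sits \emph{behind} the $A$-to-$A^c$ edges of $v_j$ and is not a prefix whenever $v_j$ has a neighbor in $A^c$, so the paper's claim does not hold as stated. A two-edge instance shows the charging argument you envision cannot rescue the algorithm as written: take $A=\{v_1,v_2\}$, $A^c=\{v_3\}$, unit weights on $(v_1,v_2)$ and $(v_1,v_3)$, and $b=1/2$; the LP sets $\Delta^*=1/2$ on both edges, both lie in $\vec{E}_1$, the greedy pass rounds up whichever is listed first and zeroes the other, and the cut containing the zeroed edge retains weight $1>b$. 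So to complete the proof you must either establish a corrected prefix property (e.g.\ by treating the $A$-to-$A^c$ edges and the intra-$A$ edges of each $\vec{E}_j$ as separate lists, each rounded against its own LP mass) or change the greedy stopping rule so that every residual suffix of $\vec{E}_j$ still carries at least its LP mass; as it stands, neither your sketch nor the paper's one-line claim closes the argument.
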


\begin{proof}
We first show that the total weight reduction returned by Algorithm~\ref{alg:widthOpt} is at most $\mathrm{OPT}(G,A,p)+k$. Let $\Delta_{uv}^*$ be the optimal solution for the LP used in Algorithm~\ref{alg:widthOpt}. Since the edge weight is at most $1$, we have for each $i \in \{1,2,\cdots, k\}$, 
$$
\sum_{(u,v)\in \vec{E}_i} \Delta_{uv} \leq \sum_{(u,v)\in \vec{E}_i} \Delta^*_{uv} + 1.
$$
Since Problem~\ref{fractal.prob} is a relaxation of Problem~\ref{integer.prob}, the total weight reduction in the optimal LP solution is at most $\mathrm{OPT}(G,A,p)$. Thus, we have
$$
\sum_{(u,v)\in E} \Delta_{uv} \leq \sum_{(u,v)\in E} \Delta^*_{uv} + k \leq \mathrm{OPT}(G,A,p)+k.
$$

Then, we show that the solution returned by Algorithm~\ref{alg:widthOpt} is a feasible solution for Problem~\ref{integer.prob}. We say that an edge $(u,v)$ covers $v_i$ if exactly one of $u$ and $v$ is in $p_i$. For each $i \in \{1,2,\cdots, k\}$, let $E_i$ denote all edges covering node $v_i$. By the rounding process, we have for any $i \in \{1,2,\cdots, k\}$,
\begin{align}\label{eq:rounding}
\sum_{(u,v)\in \vec{E}_i} \Delta_{uv} \geq \sum_{(u,v) \in \vec{E}_i} \Delta^*_{uv}.
\end{align}
Note that the edges in $E_i \cap \vec{E}_j$ are the first $|E_i \cap \vec{E}_j|$ edges in the sorted list $\vec{E}_j$ for any $j \leq i \leq k$. 
We claim that for any $j \leq i \leq k$,
\begin{align}\label{eq:feasible}
\sum_{(u,v)\in E_i \cap \vec{E}_j} \Delta_{uv} \geq \sum_{(u,v) \in E_i \cap \vec{E}_j} \Delta^*_{uv}.
\end{align}
We prove this claim by considering the following two cases. If all edges $(u,v)$ in $E_i \cap \vec{E}_j$ are rounded to $w_{uv}$, then the inequality~(\ref{eq:feasible}) holds. If there is an edge $(u,v)$ in $E_i \cap \vec{E}_j$ 
that is 
rounded to $0$, then all edges in $\vec{E}_j 
\setminus E_i$ are rounded to $0$ according to the rounding process. So, we get
$$
\sum_{(u,v)\in E_i \cap \vec{E}_j} \Delta_{uv}  = \sum_{(u,v)\in \vec{E}_j} \Delta_{uv} \geq \sum_{(u,v) \in \vec{E}_j} \Delta^*_{uv} \geq \sum_{(u,v) \in E_i \cap \vec{E}_j} \Delta^*_{uv},
$$
where the first inequality is due to the inequality~(\ref{eq:rounding}).

Therefore, we have for the edge set $E_i$ covering node $i$,  $i \in \{1,2,\cdots, k\}$, 
$$
\sum_{(u,v)\in E_i} \Delta_{uv} \geq \sum_{(u,v) \in E_i} \Delta^*_{uv},
$$
which implies that $\Delta_{uv}$ satisfies the width constraint.
\end{proof}

\begin{remark}
Let $\mathrm{LP}^*=\sum_{(u,v)\in E}\Delta^*_{uv}$ where $\Delta^*_{uv}$ 
is 
an optimal solution for Problem~\ref{fractal.prob}. We provide an example that shows the existence of an $O(\mathrm{LP}^*+k)$ additive integrality gap for any algorithm of Problem~\ref{integer.prob} by rounding the solution of Problem~\ref{fractal.prob}. We let $G$ be a path with all edges weighted $1$ and nodes are sorted from one end to the other. Let $p$ be the crusade where $p_i$ consists of the last $k-i$ nodes. Let $b=0.9$, then the LP returns a solution where $\Delta^*_{uv}=0.1$ for all edges. The only feasible integral solution, however, is $\Delta_{uv}=1$ for all edges. Therefore, the additive integrality gap is equal to $9\cdot \mathrm{LP}^*$ (and also $0.9k$).
\end{remark}

We then consider an unweighted  version of Problem~\ref{integer.prob} where $w_{uv}=1$ and $\Delta_{uv}\in\{0,1\}$ for all $(u,v)\in E$, which we refer to as the \underline{U}nweighted Crusade \underline{W}idth \underline{C}ost \underline{M}inimization \underline{P}roblem (UWCMP). We show an algorithm which finds the optimal solution of UWCMP in polynomial time. Details are shown in Appendix~\ref{ued.appdx}.

\begin{remark}
By applying the proposed network design algorithms, we can discard the waiting period in the CURE policy. Instead we modify the network such that $z_{G'}(p)\leq r/4$ and start a segment. The target path is given by $p$. Then we start a new segment. When $|D(t)|\geq z_{G'}(p)\leq r/(4 d_{\max})$, we calculate a new nearly optimal crusade $p$ and a modified graph $G'$ with $z_{G'}(p)\leq r/4$ and start a new segment.
\end{remark}

\subsection{Network Design for Minimizing the Maximum Restricted Cut}
In this section, we consider the problem of designing the structure of the network to minimize the maximum restricted cut of a given bag, namely Problem~\ref{mc_fractal.prob}. Given an algorithm for Problem~\ref{mc_fractal.prob}, we propose a policy to guarantee sublinear expected extinction time, with arbitrary ordering of curing. Let $r(t)$ be the sum of curing rates of all nodes in $I(t)$. Suppose $r(t)$ is always greater than a fixed constant $r'$ for all $t$, the policy is given by:
\begin{itemize}
    \item Network design: modify the graph such that $\phi_{G'}(I(t)) \leq r'/4$ and start a segment. Let $A$ be the set of infected nodes.
    \item Segment: Define $D(t) = I(t) 
    \setminus A$. If $|D(t)|\geq r'/(4d_{\max})-1$ then start a new network design period.
\end{itemize}

\begin{theorem}
\label{thm:mrcut}
If $r' \geq 2\log n$ always holds, by modifying $G'$ such that $\phi_{G'}(I(t)) \leq r'/4$, the policy achieves sublinear expected extinction time.
\end{theorem}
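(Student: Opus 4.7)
My plan is to follow the template of the CURE-policy extinction-time proof (Theorem~\ref{thm:extinctiontime}), adapted to the new regime in which curing ordering is arbitrary and the graph is periodically redesigned. The argument will proceed in two steps: first, establish a uniform bound $c_{G'}(I(t)) \leq r'/2$ throughout every segment, in the spirit of Lemma~\ref{segmentCut.lemma}; second, convert this rate bound into an expected extinction time via an optional-stopping argument on the process $|I(t)|$. For the first step, I would decompose $I(t) = (I(t) \cap A) \cup D(t)$, where $A$ is the segment's initial bag and $D(t) = I(t) \setminus A$. Since $I(t) \cap A \subseteq A$, the definition of the maximum restricted cut yields $c_{G'}(I(t) \cap A) \leq \phi_{G'}(A) \leq r'/4$; meanwhile $c_{G'}(D(t)) \leq d_{\max} \cdot |D(t)| < r'/4$ by the segment-end condition $|D(t)| < r'/(4 d_{\max}) - 1$. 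Summing gives $c_{G'}(I(t)) < r'/2$ throughout any segment.

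For the second step, I would observe that at every time $t$ during a segment, the total rate of transitions that increase $|I(t)|$ equals $c_{G'}(I(t)) \leq r'/2$, while the total curing rate is $r(t) \geq r'$. Thus $|I(t)|$ has pointwise drift at most $-r'/2$. Applying the optional stopping theorem to the supermartingale $|I(t \wedge \tau_0)| + (r'/2)(t \wedge \tau_0)$, where $\tau_0$ is the first time $|I(t)| = 0$, yields
\[
E[\tau_0] \;\leq\; \frac{2\,|I(0)|}{r'} \;\leq\; \frac{2n}{r'}.
\]
The hypothesis $r' \geq 2\log n$ then gives $E[\tau_0] \leq n/\log n$, which is $o(n)$ and hence sublinear in $n$.

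The main obstacle I anticipate is making the drift bound consistent across segment boundaries: the quantity $c_{G'}(\cdot)$ refers to the \emph{current} modified network, which changes whenever a new network design period begins, so I need to justify that the supermartingale property survives these transitions. The clean way is to observe that the network design is instantaneous in the continuous-time model and that each new segment begins with $|D(t)| = 0$ on the freshly redesigned graph, so the bound $c_{G'}(I(t)) < r'/2$ is re-established the moment the new segment starts. Hence the pointwise drift estimate holds almost everywhere on $[0,\tau_0]$, and the optional-stopping argument goes through globally without needing to sum expected segment lengths. The role of the assumption $r' \geq 2\log n$ is purely to turn the quantitative bound $2n/r'$ into a sublinear expression; the drift itself holds for any $r' > 0$.
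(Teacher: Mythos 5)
Your proof is correct and follows essentially the same route as the paper's: the paper likewise builds a supermartingale $M_t+\tfrac{r}{2}t$ (after noting that the cut is below $r'/2$ and the downward rate is at least $r'$) and applies Doob's optional stopping theorem to get $\E[T]\leq 2k/r'\leq n/\log n$. The only difference is that you explicitly verify the bound $c_{G'}(I(t))\leq r'/2$ via the decomposition $I(t)=(I(t)\cap A)\cup D(t)$ and the subadditivity of cuts, and you address the segment-boundary issue --- details the paper asserts without spelling out.
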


\begin{proof}
We let the curing process start at $t=0$. Let $M_t \defeq  \sizeof{I(t)}$. We define the process $P_t$ on the space $\{0,1,2,\cdots\}$ as follows: Let $P_0 = M_0$. The process $P_t$ has a downward transition rate of $r$ and a rate $r/2$ of upward transition. Let $T$ be the first time that $P_t$ takes the value $0$. We note that $M_t$ is stochastically dominated by $P_t$.

We further let $H_t\defeq M_t+\frac{r}{2}\cdot t$, and $\hat{H}_t$ be the truncation of $H_t$ which stops at time $T$ and keeps the value $H_T$ afterwards. Since the cut between the infected nodes and the healthy nodes is less than $r/2$, and the downward drift is at least $r$, the total downward drift of $\hat{H}_t$ is greater than $r/2$. Therefore $\hat{H}_t$ is a supermartingale.

By Doob’s optional stopping theorem,
\begin{align*}
    k= \expec{}{M_0} = \expec{}{\hat{H}_0} \geq \expec{}{\hat{H}_T} + \frac{r}{2}\expec{}{T} \geq \frac{r}{2} \expec{}{T}\,.
\end{align*}
Therefore,
\begin{align*}
    \expec{}{T} \leq \frac{2k}{r}\,.
\end{align*}
Since $r\geq 2\log n$, we attain  $\expec{}{T}\leq \frac{k}{\log n}\leq \frac{n}{\log n}$.
\end{proof}

Now we present the algorithm for Problem~\ref{mc_fractal.prob}.
We consider the following minimax program for the modified graph $G'=(V,E, w')$ where $w'_{uv}=w_{uv}-\Delta_{uv}$:
\begin{align}
\label{minimax.eqn}
\minimize_{G'} \quad & \phi_{G'}(A)\,,\\
\text{subject to} \quad & 0 \leq \Delta_{uv} \leq w_{uv}, \forall(u,v)\in E\,,\nonumber\\
& \sum_{(u,v)\in E} \Delta_{uv} \leq b'\,,\nonumber
\end{align}
where $b'$ is a given edge reduction budget. We note that by running a binary search on $b'$ we solve Problem~\ref{mc_fractal.prob} with any precision.

\begin{theorem}\label{thm:maxcut_modify}
There exists a polynomial time algorithm which computes a graph $G'$ with weight reductions $\Delta_{uv}$, in which $\phi_{G'}(A)\leq 1.14\cdot \phi_{\tilde{G}}(A)$ where $\tilde{G}$ is any modified graph with the same reduction budget $b$.
\end{theorem}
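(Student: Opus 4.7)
The plan is to recognize the inner maximization $\phi_{G'}(A)=\max_{Q\subseteq A}c_{G'}(Q)$ as a constrained max-cut in $G'$—the vertices of $A^{c}$ are forced to lie on a common side of the cut—and to attack Problem~\ref{mc_fractal.prob} through the Goemans--Williamson SDP relaxation of this max-cut. Since the Goemans--Williamson integrality gap is $1/\alpha_{GW}\approx 1.1383<1.14$, any bound routed through this relaxation loses at most the factor claimed in the theorem. Fix a pivot $u_{0}\in A^{c}$ and, for each weight reduction $\Delta$, define
\begin{equation*}
\phi^{SDP}_{G'}(A)\defeq\max\Bigl\{\tfrac12\textstyle\sum_{(u,v)\in E}(w_{uv}-\Delta_{uv})(1-Y_{uv})\,:\,Y\succeq 0,\ Y_{ii}=1\ \forall i,\ Y_{uu_{0}}=1\ \forall u\in A^{c}\Bigr\}.
\end{equation*}
The constraints $Y_{uu_{0}}=Y_{uu}=Y_{u_{0}u_{0}}=1$ force the Gram vectors of $u\in A^{c}$ to coincide with $y_{u_{0}}$, so the standard hyperplane rounding places all of $A^{c}$ on one side; hence the inequalities $\alpha_{GW}\,\phi^{SDP}_{G'}(A)\leq\phi_{G'}(A)\leq\phi^{SDP}_{G'}(A)$ continue to hold under the side-fixing constraint.

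Next I would compile the saddle-point problem $\min_{\Delta}\phi^{SDP}_{G'}(A)$ into a single convex SDP. Because the bilinear objective is linear in $\Delta$ and in $Y$, by weak SDP duality the constraint ``$\phi^{SDP}_{G'}(A)\leq t$'' is equivalent to the existence of multipliers $\lambda\in\mathbb{R}^{V}$, $\mu\in\mathbb{R}^{A^{c}}$, and a slack matrix $Z\succeq 0$ satisfying a linear equality depending on $w-\Delta$ together with a scalar upper bound involving $t$. Combining these dual variables with $\Delta$, the box constraints $0\leq\Delta_{uv}\leq w_{uv}$, the budget $\sum\Delta_{uv}\leq b'$, and a scalar $t$ to be minimized yields a polynomial-size SDP that can be solved by interior-point or ellipsoid methods. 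Let $(\Delta^{*},t^{*})$ be an optimal solution and output $G^{*}=(V,E,w-\Delta^{*})$.

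The approximation guarantee then follows from a short chain of four inequalities. For any competing modified graph $\tilde G$ satisfying the same budget, its reductions are feasible for the SDP, so $t^{*}\leq\phi^{SDP}_{\tilde G}(A)$; the Goemans--Williamson gap applied to $\tilde G$ gives $\phi^{SDP}_{\tilde G}(A)\leq\phi_{\tilde G}(A)/\alpha_{GW}$; and since the SDP is a relaxation of the integer max-cut, $\phi_{G^{*}}(A)\leq\phi^{SDP}_{G^{*}}(A)=t^{*}$. Chaining yields $\phi_{G^{*}}(A)\leq\phi_{\tilde G}(A)/\alpha_{GW}\leq 1.14\,\phi_{\tilde G}(A)$, which is the promised bound.

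The main obstacle, I expect, is the clean dualization of the inner SDP so that the overall program is a single convex SDP of polynomial size, together with verifying that the Goemans--Williamson factor $\alpha_{GW}$ is not degraded by the side-fixing constraint $Q\subseteq A$. The latter point is resolved by the observation that $Y_{uu_{0}}=1$ collapses the vectors of $A^{c}$ to the single vector $y_{u_{0}}$, so any random hyperplane automatically respects $Q\subseteq A$; with that in hand, the proof reduces to a minimax variant of the standard Goemans--Williamson argument plus routine SDP engineering.
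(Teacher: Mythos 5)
Your proposal follows essentially the same route as the paper's proof: reduce the restricted max-cut to an ordinary max-cut (the paper contracts $A^{c}$ to a single node $s$ where you instead impose $Y_{uu_{0}}=1$ for $u\in A^{c}$, which is equivalent), relax it via the Goemans--Williamson SDP, dualize the inner maximization to fold $\min_{\Delta}\max_{Y}$ into one polynomial-size SDP, and chain the integrality-gap and relaxation inequalities exactly as the paper does. The one point to tighten is that turning the constraint $\phi^{SDP}_{G'}(A)\leq t$ into an equivalent dual-feasibility condition requires strong, not merely weak, SDP duality; this holds here by strict feasibility of the primal, as the paper notes.
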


The maximum restricted cut $\phi_{G}(A)$ for a fixed $G$ and $A$ is given by the following integer program:
\begin{align}
    \maximize_{y_i} \quad& \frac{1}{2}\left(\sum_{\substack{i<j\\i,j\in A}}w_{ij}(1-y_i y_j) + \sum_{\substack{i\in A\\j\in (V 
    \setminus A)}}w_{ij}(1-y_i)\right)\\
    \text{subject to} \quad & |y_i|=1\,,\nonumber
\end{align}
where $y_i\in\{-1,1\}$ for $i\in A$ provides the partition of nodes in the maximum restricted cut.  

Equivalently, we can combine all the nodes in $V 
\setminus A$ as a new node $s$ and obtain a modified graph $\hat{G}$ with $|A|+1$ nodes $\hat{V} = A + s$. The edge set $\hat{E}$ in $\hat{G}$ contains all the edges $(u,v)$ in the graph $G[A]$ with their original weights. We add edge $(u,s)$ to $\hat{E}$ with the weight $\hat{w}_{us}=\sum_{(u,v)\in E, v\in A^c}w_{uv}$ if there exists an edge $(u,v)$ for $v\in A^c$.  Then we study the equivalent  maximum cut problem
\begin{align}
     \maximize_{y_i} \quad& \frac{1}{2}\sum_{\substack{i<j\\i,j\in \hat{V}}}\hat{w}_{ij}(1-y_i y_j)\\
    \text{subject to} \quad & |y_i|=1\,.\nonumber   
\end{align}
Let $L_{\hat{G}}$ be the Laplacian matrix of the graph $\hat{G}$. The above problem has a well-known Semidefinite Programming (SDP) relaxation~\citep{GW95}:
\begin{align}
\label{primal.eqn}
    \maximize_{X} \quad & \frac{1}{4}\trace{L_{\hat{G}}X}\\
    \text{subject to} \quad & X\succcurlyeq 0\,.\nonumber
\end{align}

Therefore, a relaxation of the program in~(\ref{minimax.eqn}) is given by
\begin{align}
    \min_{\Delta} ~~\max_{X} \quad &\trace{B^\top (W-\diag{\Delta}) B X}\label{eq:relax}\\
    \text{subject to} \quad & X\succcurlyeq 0\,, W \succcurlyeq \diag{\Delta} \succcurlyeq 0\,, \nonumber\\
    & \mathbf{1}^\top \Delta \leq b\,,\nonumber
\end{align}
where $B$ is the edge-node incident matrix of $\hat{G}$ and $W$ is the diagonal weight matrix for edges in $\hat{G}$.

\begin{proof}[Proof of Theorem~\ref{thm:maxcut_modify}]
For any maximum cut instance, the optimal value of the SDP relaxation is upper bounded by $1.14$ times the maximum cut~\citep{GW95}. Thus, we have that the optimal value of program~(\ref{eq:relax}) is at most $1.14$ times the optimal value of program~(\ref{minimax.eqn}). Let $(\Delta^*,X^*)$ be the optimal solution of program~(\ref{eq:relax}). 
Let the graph $G'$ be the modified graph corresponding to the weight reduction $\Delta^*$. 
By the minimax theorem~\citep{Neu28,Sio58}, we have 
$$
\phi_{G'}(A) \leq \mathrm{OPT}(\Delta^*,X^*),
$$
where $\mathrm{OPT}(\Delta^*,X^*)$ is the optimal value of program~(\ref{eq:relax}). Thus, we find a modified graph $G'$ such that $\phi_{G'}(A)$ is at most $1.14$ times the optimal solution of program~(\ref{minimax.eqn}).

It suffices to find an algorithm that solves~(\ref{eq:relax}) in polynomial time.
The dual form of~(\ref{primal.eqn}) is given by
\begin{align}
\label{dual.eqn}
    \minimize_{u,t} \quad & \frac{1}{4}n \cdot t\\
    \text{subject to}\quad & tI - (L+\diag{u})\succcurlyeq 0\,,\nonumber\\
    & \mathbf{1}^{\top}u = 0\,.\nonumber
\end{align}
The primal~(\ref{primal.eqn}) and dual~(\ref{dual.eqn}) both have feasible solutions. In addition the primal has a strictly feasible solution. Therefore, weak and strong duality hold for the pair, indicating that the primal and the dual have the same optimal solution.

The program in~(\ref{eq:relax}) can be further written as 
\begin{align}
\label{polySovable.eqn}
    \minimize_{\Delta, u,t} \quad & \frac{1}{4}n\cdot t\\
    \text{subject to} \quad & tI +B^\top \diag{\Delta} B-\diag{u}\succcurlyeq B^T W B\,,\nonumber\\
    & \mathbf{1}^{\top}u = 0\,,\nonumber\\
    & W \succcurlyeq \diag{\Delta} \succcurlyeq 0\,,\nonumber\\
    & \mathbf{1}^\top \Delta \leq b\,.\nonumber
\end{align}
It is easy to observe that the objective is linear and the constraints are convex in $\Delta$, $u$, and $t$. Therefore, (\ref{polySovable.eqn}) is an SDP which can be solved in polynomial time.
\end{proof}
\section{Numerical Simulations}
\label{exper.sec}

In this section, we examine the effectiveness of the proposed curing policies as well as the network design algorithms through numerical simulations. The model for the SIS process is described in Section~\ref{model.subsec}. For all numerical examples, the original infection rates of the edges in the network are sampled from a uniform distribution of the range $[0.4, 1.6]$. Curing rates of the nodes are allocated by the policies with a total budget $r$.

\subsection{Simulations for Curing Policies}
We compare Algorithm~\ref{alg:apprImpe} against 4 baseline curing policies, including two static policies and their dynamic variations.
\begin{enumerate}
    \item Uniform (static): the curing budget is uniformly allocated to all nodes, $\rho_u(t)=1/n, \forall u\in V$.
    \item Degree (static)~\citep{BCGS10}: the curing rate of a node $u\in V$ is set to $r\cdot d_u/(\sum_{u\in V} d_u), \forall u\in V$.
    \item Uniform (dynamic): the curing budget is uniformly allocated to all \emph{currently infected} nodes, i.e.\ $\rho_u(t) = 1/\sizeof{I(t)}, \forall u\in I(t)$, and otherwise set to $0$.
    \item Degree (dynamic): the curing rate of a node $u\in I(t)$ is set to $\rho_u(t) = r\cdot d_u I_u(t)/(\sum_{u\in V} d_u I_u(t))$, and otherwise set to $0$.
\end{enumerate}

We implement the CURE policy by using Algorithm~\ref{alg:apprImpe} to calculate the ordering of curing. We note that instead of using the $1/3$-balanced cut algorithm of~\cite{LR99}, we use a heuristic for balanced cut based on the eigenvector corresponding to the second smallest eigenvalue of the graph Laplacian. We sort the nodes using their corresponding values in the eigenvector, and run a sweep algorithm to find the best $1/3$-balanced cut, similar to the sweep algorithm for the sparsest cut problem~\citep{Chu97}. Our study on the fair curing policy shows the existence of polynomial time approximations to the fair curing problems. However, the construction of the tree embedding~\citep{Rac08} remains a theoretical tool, not a practical algorithm. Therefore, we do not implement Algorithm~\ref{alg:fairapprImpe}.

To test the effectiveness of the CURE policy using Algorithm~\ref{alg:fairapprImpe}, we simulate the SIS process on two typical networks: the locally connected network and the binary tree network. The locally connected network is augmented from a path of $n=3000$ vertices. Suppose the nodes in a path are labeled using its distances to one of the endpoint of the path, then we add edges $\{i, i+2\},  i =0,1,\ldots,(n-2)$ to the path to get the locally connected network with $m=5997$ edges. The binary tree network is a perfect binary tree with $11$ layers, therefore it has $n=2047$ nodes and $m=2046$ edges.


\begin{figure}[htbp]
    \centering
    \includegraphics[width=0.4\linewidth]{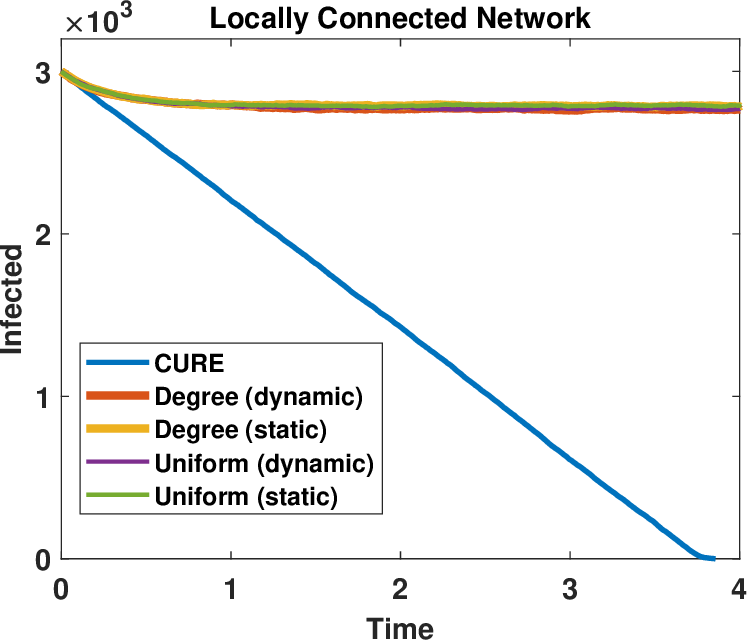}\quad
    \includegraphics[width=0.4\linewidth]{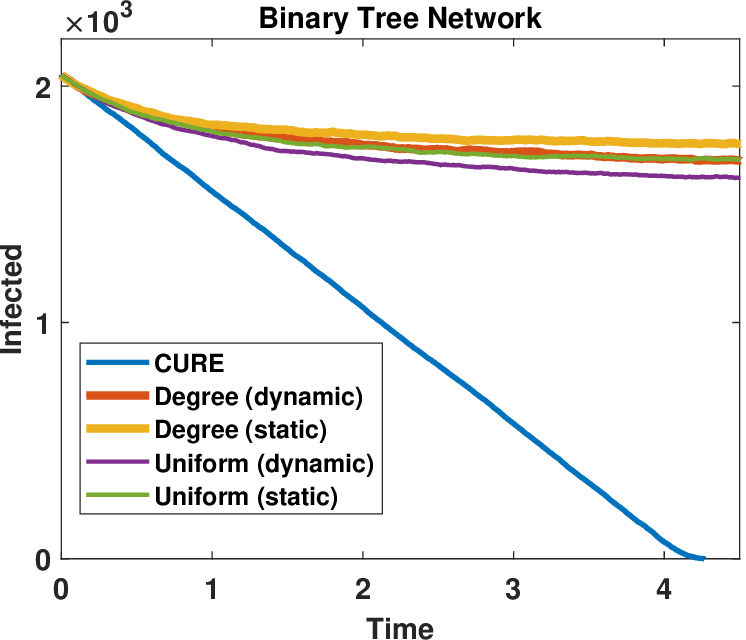}
    \caption{Comparison of CURE (using Algorithm~\ref{alg:apprImpe}) with baseline policies}
    \label{fig:curingPolicies}
\end{figure}

Figure~\ref{fig:curingPolicies} shows the performance of the CURE policy and the baselines. The figure shows the average results across $10$ runs for all methods. The curing budgets are set to $800$ and $500$ for the locally connected network and binary tree network, respectively. For the same network, all policies use the same curing budget. We assume all nodes in the networks are infected in the beginning of the process.  

In both networks, the CURE policy is the only one that succeeds in curing all nodes in all $10$ runs. All baseline methods fail to cure the network in any run. The figure clearly shows the advantage of the CURE policy against the considered heuristics, whether they are static or dynamic. 

\subsection{Simulations for Network Design Algorithms}
For the network design algorithms, we simulate the process on a human contact network and an email network. We use the networks to simulate the spread of human viruses and internet worms respectively. The human contact network is constructed by collecting data using mobile technologies~\cite{klepac2018contagion,firth2020using}. The network has $405$ nodes and $1258$ edges. We let the number of initial infections to be $200$, and the curing budget be $500$. All initial infected nodes are selected uniformly at random. The email network was generated using email data from a research institute~\cite{leskovec2007graph}. We ignore the directions of edges, and then attain a network with $986$ nodes and $16057$ edges. We let the number of initial infected nodes to be $200$, and $r=10000$ for the email network.

For both networks, We compare CURE policy with CURE policy augmented by linear programming (LP)-based network design, as well as random curing budget allocation complemented by semi-definite programming (SDP)-based network design. Figure~\ref{fig:networkDesign} shows the average trajectories of $10$ runs for each method for both networks. Simulations on both networks show consistent results. 

For the human contact network, the trajectory of the CURE policy clearly shows a waiting period at the beginning of the process. After the curing rates are dynamically allocated, the number of infections plateaus subsequent to a transient declination. With network design algorithms, the curing processes are continues and quickly stop the SIS processes. We note that the SDP approach considers the worst curing rate allocation strategy and hence is more conservative. The SDP network algorithm achieves shorter extinction time with random curing targets at any given time, with an average of $890.24$ network weight reduction for the human contact network, while the LP algorithm only reduces $419.11$ edge weights in average. Results are similar for the email network, where SDP and LP costs $4994.05$ and $2595.09$ weight reduction in average, respectively. The averages are taken over all network design periods until the end of the processes.

\begin{figure}[htbp]
    \centering
    \includegraphics[width=0.4\linewidth]{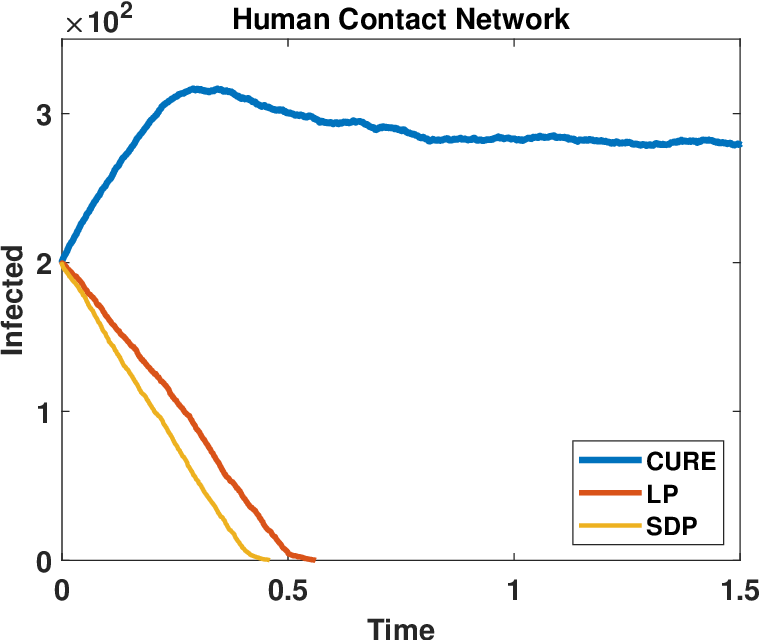}\quad
    \includegraphics[width=0.4\linewidth]{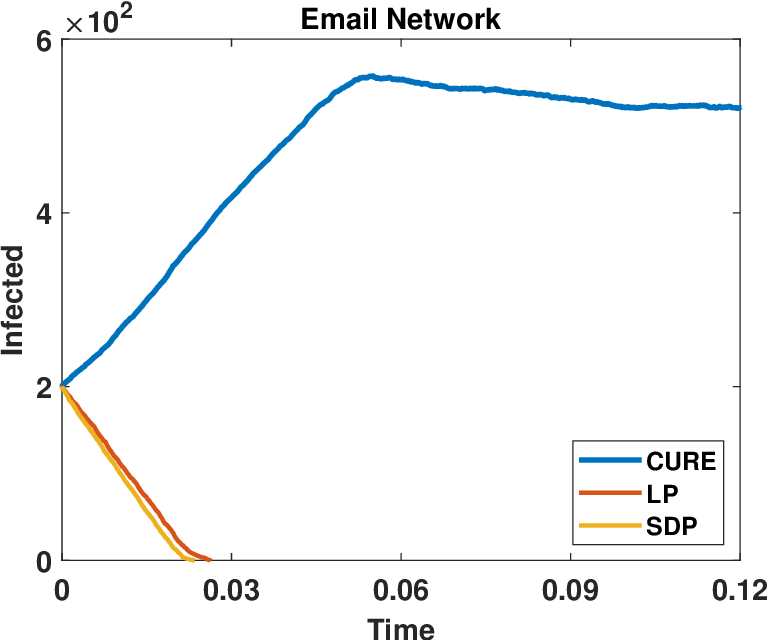}
    \caption{Comparison of network design algorithms: 1) CURE without network design; 2) CURE with linear programming network design; 3) Random curing rate allocation with SDP.}
    \label{fig:networkDesign}
\end{figure}

\section{Conclusion}
\label{concl.sec}
In this paper, we have studied efficient algorithms for dynamic curing policies of SIS epidemic models. We have proposed approximation algorithms to the impedance and a fair impedance of an infected set in a given graph. We have also proposed algorithms for network design problems to help cure the network. Provable guarantees have been provided for all the studied algorithms. We have shown the effectiveness of the policy and network co-design approach by running simulations on real contact networks.

There are two open problems for consideration in future research. The first open problem is to find algorithms ensuring fairness in \emph{every} interval longer than a given constant. The second is to devise algorithms for optimal network design to minimize impedance \emph{without} fixing the order of curing. Future work should also study practical algorithms for fair curing, and improving the running time of current solutions.

\section*{Acknowledgement}
The first author is supported in part by the National Natural Science Fundation of China under grant 62303338 and in part by the Fundamental Research Funds for the Central Universities, under grant Sichuan University YJ202285. The second author is supported in part by the National Science Foundation  under grants CCF-1955351 and CCF-1934931. The third  author  is supported in part by the Microsoft  Corporation  and  in  part  by  the  National  Science  Foundation  under  grants  NFS-CNS-2028738 and NFS-ECCS-2032258. The fourth author is supported in part by Knut and Alice Wallenberg Foundation and by the Swedish Research Council.

\appendix

\section{Unweighted Edge Deletion}
\label{ued.appdx}
We show a polynomial time reduction from UWCMP to the following problem.
\begin{problem}[\underline{I}nterval \underline{S}cheduling \underline{P}roblem on $\underline{\ell}$ machines ($\ell$-ISP)]
Given a set of $m$ intervals $\Pi = \{\pi_1,\dots, \pi_m\}$ where each interval $\pi_i\in \Pi$ has a start time $s_i$ and a finish time $f_i$ ($s_i<f_i$), find a maximal subset of intervals $Q\subseteq \Pi$ and a mapping $\sigma:Q\mapsto [\ell]$ such that there is no overlap among intervals scheduled on every machine. 
\end{problem}
\begin{theorem}
There exists a polynomial time algorithm which finds the optimal solution for UWCMP.
\end{theorem}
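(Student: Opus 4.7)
The plan is to reduce UWCMP in polynomial time to $\ell$-ISP with $\ell = b$ and invoke the known polynomial-time algorithm for interval scheduling. First I would construct an interval for each edge based on the fixed crusade $p = (p_0,\ldots,p_k)$. Letting $v_i = p_{i-1}\setminus p_i$ denote the node removed at step $i$, an edge $(v_i, v_j) \in E$ with $i < j$ crosses precisely the cuts $p_i, p_{i+1}, \ldots, p_{j-1}$, so it would be mapped to the interval $[i, j-1]$. An edge $(v_i, u)$ with $u \in A^c$ crosses only the cuts $p_0, \ldots, p_{i-1}$ and would be mapped to the interval $[0, i-1]$. Edges with both endpoints in $A^c$ contribute to no cut, so they can be kept at zero cost and ignored in the reduction.

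Next I would translate the constraints of UWCMP into a pointwise depth bound on this interval family. Under the above mapping, the contribution of a kept edge to $c(p_l)$ equals the indicator that its interval contains $l$. Hence a kept subset $S \subseteq E$ satisfies $z_{G'}(p) \leq b$ if and only if at every integer $l \in \{0,1,\ldots,k\}$ at most $b$ of the intervals corresponding to $S$ cover $l$. Since minimizing $\sum_{(u,v)\in E} \Delta_{uv}$ with $\Delta_{uv} \in \{0,1\}$ is the same as maximizing $|S|$, UWCMP becomes the problem of finding the largest subfamily of intervals whose pointwise depth nowhere exceeds $b$.

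The last step is to invoke the classical characterization (from interval graph coloring) that a family of intervals can be partitioned into $b$ chains of pairwise non-overlapping intervals if and only if its maximum pointwise depth is at most $b$. This identifies the feasible sets $S$ with exactly those interval subfamilies that can be scheduled on $b$ machines, so UWCMP coincides with $\ell$-ISP at $\ell = b$. A polynomial-time algorithm for $\ell$-ISP, such as those in \cite{CL95, FN95}, then returns an optimal $Q$, and deleting the edges whose intervals are not in $Q$ yields an optimal UWCMP solution. The main obstacle to write out carefully is the ``if'' direction of the depth-versus-schedulability equivalence; I would establish this by a greedy left-to-right assignment that, whenever a new interval begins, assigns it to any machine currently free (at least one must exist since the depth is bounded by $b$), then verify that this produces a valid schedule.
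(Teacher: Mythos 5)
Your proposal is correct and follows essentially the same route as the paper: both reduce UWCMP to $\ell$-ISP with $\ell=b$ by mapping each edge to the interval of cut indices it crosses and then invoking a polynomial-time greedy interval-scheduling algorithm. Your write-up is in fact somewhat more careful than the paper's, since you make explicit the exact interval endpoints, the treatment of edges incident to $A^c$, and the depth-versus-$b$-machine-schedulability equivalence that the paper's reduction uses implicitly.
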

\begin{proof}
For an instance of UWCMP, nodes are ordered according to the crusade $p$. In particular, the node in $p_i 
\setminus p_{i-1}$ is assigned with the label $v_i$ for $i\in[k]$. 
Then the ordered vertices $\{v_i\}_{i=1}^k$ in UWCMP are mapped to timestamps in $\ell$-ISP. The edges are mapped to intervals. Specifically, for an edge $e=(v_i,v_j)$, we let $s_e=\min\{v_i,v_j\}$ and $f_e=\max\{v_i,v_j\}$. We further map the threshold to the number of machines by defining $\ell=b$. Then we obtain the input for $\ell$-ISP from an instance of UWCMP. We note that $\ell$-ISP can be solved by a greedy algorithm in $O(m \log m)$ running time~\cite{FN95,CL95}.
From the output of an instance of $\ell$-ISP, we can construct a feasible solution for the corresponding UWCMP instance. For the interval set $Q$, we find the corresponding edges $E_Q$ in UWCMP and let $\Delta_{uv}=0$ for $(u,v)\in E_Q$. We further let $\Delta_{uv}=1$ for $(u,v)\in E 
\setminus
E_Q$. The cost of UWCMP is given by $|E|-|Q|$. The running time of the algorithm for UWCMP is $O(m\log m)$ where $m$ is the number of edges of $G$ in UWCMP.
\end{proof}

\section{Analysis for the Fair Curing Algorithm}
\label{appdx:fair}

In Algorithm ~\ref{alg:fairapprImpe}, We sequentially find fair partitions for the given checkpoints. For each checkpoint, we find a fair partition at this checkpoint.
Then, we run balanced cuts recursively to pick the crusade for nodes before this checkpoint. Then we run the algorithm on the subgraph supported on the remaining nodes.

We reduce the problem to finding a fair partition with a small cut on tree instances. To this end, we recall the definition of hierarchically well-separated trees.

\begin{lemma}[\cite{Rac08}]
For any graph $G=(V,E,w)$ with a weight function $w: E\mapsto \mathbb{R}_{\geq 0}$, there exists a collection of trees  $T_1=(V,E_1), T_2=(V, E_2), \ldots, T_N=(V, E_N)$ with tree $T_i$ having an edge weight function $w^i: E_i\mapsto \mathbb{R}_{\geq 0}$, and find non-negative multipliers $(\lambda_1,\ldots,\lambda_N)$, such that $\sum_{i=1}^N\lambda_i=1$ and $N=\mathrm{poly}(|V|)$. For any subset $S\subseteq V$ let $\calB(S)$ be the set of edges in $E$ with one endpoint in $S$, and $\calB_i(S)$ be the set of edges in $E_i$ with one endpoint in $S$. Then, for any $S\subseteq V$:
\begin{enumerate}
    \item $w(\calB(S))\leq w^i(\calB_i(S))$ for every $i\in [N]$;
    \item $\sum_{i=1}^N \lambda_i w_{i}(\calB_i(S))\leq O(\log n)w(\calB(S))$.
\end{enumerate}
\end{lemma}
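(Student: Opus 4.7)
The plan is to establish the lemma via linear-programming duality, reducing the existence of the desired distribution to a single-tree construction problem. Formulate the claim as a feasibility LP: find non-negative weights $\lambda_i$ summing to $1$, supported on trees $T$ on vertex set $V$ that are \emph{dominating} (i.e.\ satisfy $w(\calB(S)) \leq w^T(\calB^T(S))$ for every $S$), such that the aggregated cost $\sum_i \lambda_i w^i(\calB_i(S)) \leq \alpha \cdot w(\calB(S))$ holds for every $S \subseteq V$ with $\alpha = O(\log n)$. The dual, obtained by attaching multipliers $\mu_S \geq 0$ to the cut constraints and applying strong LP duality, reduces the problem to the following oracle question: for every non-negative measure $\mu$ on subsets $S \subseteq V$, exhibit a single dominating tree $T$ with $\sum_S \mu(S) w^T(\calB^T(S)) \leq O(\log n) \sum_S \mu(S) w(\calB(S))$.

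To construct such a tree for a given $\mu$, first compute the edge loads $\tilde{w}(e) \defeq \sum_S \mu(S) \1\{e \in \calB(S)\}$, so that the target right-hand side is exactly $\sum_e w(e) \tilde{w}(e)$. Next, build a hierarchical (laminar) decomposition of $G$ at geometrically decreasing scales, at each scale invoking an approximate sparsest-cut or balanced-cut subroutine applied with edge capacities $\tilde{w}$; standard low-expansion decomposition arguments (Leighton--Rao or padded decomposition) guarantee that at every level the cluster boundaries cut only an $O(\log n)$ fraction of the available internal $\tilde{w}$-volume in expectation. The laminar family induces a tree $T$ on $V$ (using a canonical identification of internal cluster nodes with one of their leaves), and each tree edge is weighted by the total $w$-capacity of the $G$-edges crossing the corresponding cluster boundary. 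This weighting immediately yields the dominance property $w(\calB(S)) \leq w^T(\calB^T(S))$ for every $S$, since the $G$-edges cut by $S$ are fully charged to the tree edge at the lowest level separating their endpoints.

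The main obstacle is to bound $\sum_S \mu(S) w^T(\calB^T(S))$ by $O(\log n) \sum_e w(e) \tilde{w}(e)$. This requires a careful charging argument: each graph edge $e$ contributes to $w^T(\calB^T(S))$ exactly for those cuts $S$ that separate the endpoints of $e$ in $T$, and the quality guarantee of the balanced-cut subroutine ensures that, summed across the $O(\log n)$ levels of the decomposition, the total tree weight charged to $e$ is $O(\log n) \cdot w(e)$ once averaged against $\tilde{w}$. Combining these charges yields the target $O(\log n)$ factor. Finally, the polynomial bound $N = \mathrm{poly}(|V|)$ follows from Carath\'eodory's theorem applied to the LP, whose extreme points lie in a polynomial-dimensional face because each dominating tree contributes at most $2(n-1)$ essential tree-cut constraints; equivalently, a constructive multiplicative-weights-update scheme that repeatedly invokes the dual oracle and averages the returned trees converges within $\mathrm{poly}(n)$ iterations.
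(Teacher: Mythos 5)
The paper does not prove this lemma; it is imported verbatim from R\"acke (2008), so there is no internal proof to compare your attempt against. Your outline does reproduce the architecture of R\"acke's argument --- an LP-duality/multiplicative-weights reduction to a single-tree oracle, a hierarchical decomposition to answer the oracle, and Carath\'eodory (or the iteration count of the MWU scheme) to obtain $N=\mathrm{poly}(|V|)$ --- and your charging argument for the dominance property in item 1 is sound: every graph edge crossing $S$ also crosses the boundary of the lowest cluster on the tree path separating its endpoints, and the corresponding tree edge lies in the tree cut of $S$, so its weight absorbs $w(e)$.

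The gap is in the oracle step, which is where the entire quantitative content of the theorem lives. You answer the oracle by running an approximate balanced or sparsest cut at each of $O(\log n)$ geometric scales and then assert that, summed across levels, the tree weight charged to an edge $e$ is $O(\log n)\cdot w(e)$. With a Leighton--Rao-type guarantee that loses an $O(\log n)$ factor \emph{per level} and $O(\log n)$ levels, the accounting you describe yields $O(\log^2 n)$, not $O(\log n)$; nothing in the sketch explains why the per-level losses fail to multiply with the number of levels. R\"acke's actual route is a duality between capacity-dominating (cut-based) tree embeddings and distance-dominating tree embeddings: the dual oracle becomes precisely the problem of embedding the shortest-path metric induced by the dual edge lengths into a single tree with $O(\log n)$ \emph{average} stretch, which is the Fakcharoenphol--Rao--Talwar theorem. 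The $O(\log n)$ there comes from the telescoping of the $\log\bigl(|B(v,2^{i})|/|B(v,2^{i-1})|\bigr)$ terms across scales in a random padded decomposition, not from a per-level balanced-cut quality bound. Without that ingredient (or an equivalent telescoping argument), your construction only supports an $O(\log^2 n)$ factor in item 2, which would weaken the downstream approximation guarantees in Theorems~\ref{thm:fair} and~\ref{thm:fair-multiple-checkpoint}.
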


\begin{proof}[Proof of Theorem~\ref{thm:fair}]
We prove a simple setting with two groups $V_1, V_2$ and one checkpoint $\tau_1$. We first partition the bag $A$ into $\gamma$-fair parts $S_0$ and $S_1$ by using the hierarchical well-separated trees and dynamic programming on trees. Then we partition $S_0$ and $S_1$ with balanced cuts recursively until the size of a set is $1$.

Let $S_0^*, S_1^*$ be the minimum weight $\gamma$-fair partitioning of bag $A$. By the hierarchical well-separated trees, there exists a tree embedding $T^*$ such that the total weight of edges crossing the sets $S_0^*, S_1^*$ is preserved up to a $O(\log k)$ factor. Suppose we find the minimum weight $\gamma$-fair partitioning on the tree $T^*$, which is shown in Lemma~\ref{lem:dp}. Then, we find a $\gamma$-fair partitioning which has a total cut weight at most $O(\log k)$ times the optimal $\gamma$-fair impedance.

Finally, we pick the order for two parts $S_0$ and $S_1$ separately with recursive balanced cuts. Similar to Theorem~\ref{thm:impedance}, there are totally $O(\log k)$ balanced cuts and each cut has a total weight at most $O(\log k)$ times the optimal $\gamma$-fair impedance. Combining all these balanced cuts and the cut edges crossing $S_0$ and $S_1$, we get the conclusion.
\end{proof}

\begin{lemma}\label{lem:dp}
Given a tree $T$, there exists a polynomial time dynamic programming algorithm that finds the minimum cut that partitions the tree $T$ into $\gamma$-fair parts.
\end{lemma}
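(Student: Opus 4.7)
The plan is to root the tree $T$ at an arbitrary vertex $r$ and run a dynamic program that decides, subtree by subtree, which vertices to place on each side of the cut. For each vertex $v$, each vector $\vec{n}=(n_1,\dots,n_\ell)$ with $n_h\in\{0,1,\dots,k\}$, and each side label $b\in\{0,1\}$, I would define $f(v,\vec n,b)$ to be the minimum total weight of tree edges inside the subtree rooted at $v$ that cross the cut, subject to exactly $n_h$ vertices of group $V_h$ in that subtree being placed in $S_0$, and to $v$ itself being placed on side $b$ (with $b=0$ meaning $v\in S_0$). Because the number of groups $\ell$ is a fixed constant, this state space has size $\operatorname{poly}(k)$.

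The base case handles leaves: a leaf $v$ belonging to group $V_{h(v)}$ has $f(v,\mathbf{e}_{h(v)},0)=0$ (where $\mathbf{e}_{h(v)}$ is the standard unit vector with a one in coordinate $h(v)$) and $f(v,\mathbf{0},1)=0$, with $f(v,\cdot,\cdot)=+\infty$ elsewhere. For the inductive step I would process the children of $v$ one at a time, maintaining a running DP table for $v$ that reflects the children already merged in. When merging in a new child $u$, I combine a partial state $(\vec n^v,b^v)$ at $v$ with a state $(\vec n^u,b^u)$ at $u$ into the composite state $(\vec n^v+\vec n^u,b^v)$, paying an additional cost of $w(v,u)$ precisely when $b^v\neq b^u$, and keeping the pointwise minimum over all pairs producing the same composite state. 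This convolution-style merge runs in time polynomial in $k$ at each vertex, so the total runtime is polynomial.

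To extract the answer I would scan $f(r,\vec n,b)$ over all root-level states whose two implied parts $\vec n$ and $(|A\cap V_h|-n_h)_{h\in[\ell]}$ both satisfy the $\gamma$-fairness inequality in~\eqref{fairCrus.def} (and, when invoked by Algorithm~\ref{alg:fairapprImpe}, also the imposed size requirement $\sum_h n_h=\tau_i$). Reporting the minimum such value, together with a standard DP back-trace that records the arg-min at each merge, yields the desired minimum-weight $\gamma$-fair partition.

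The main conceptual point is the choice of DP state: since the fairness condition constrains each group's count separately, the state must track the per-group counts $(n_1,\dots,n_\ell)$ exactly, not merely the total size of $S_0$. Once that choice is made the tree structure makes the transitions routine, and the assumption that $\ell$ is constant keeps the running time polynomial in $k$.
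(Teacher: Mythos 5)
Your proposal is correct and follows essentially the same route as the paper: a bottom-up tree DP whose state records the current vertex, its side label, and the exact per-group counts assigned to $S_0$, paying an edge's weight whenever parent and child labels differ, and finally scanning root states for $\gamma$-fairness. The only cosmetic difference is that the paper first reduces to a binary tree while you merge children one at a time via a convolution step; both yield the same polynomial bound since $\ell$ is constant.
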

\begin{proof}
As explained in Lemma 15.18 in~\cite{WS11}, we can assume without loss of generality that the tree is binary.
Our dynamic programming algorithm
is based on the value table $$
DP[v, l_v, k_1(0), k_2(0)],
$$
where $k_h(0)$, $h\in [\ell]$ denotes the number of nodes in the sub-tree rooted at $v$ and group $h$ which are assigned to part $S_0$.
The indicator variable $l_v$ takes value $0$ if $v \in S_0$, otherwise $l_v = 1$. We use $DP[v, l_v, k_1(0), k_2(0)]$ to record the minimum cut in the subtree rooted at $v$ that satisfies the given configuration.

Then we calculate the value table from the leaves of the tree to the root.
We set $DP[v, s_v, k_1(0), k_2(0)]$ to the minimum of the following four cases:

Case 1: When $l_v=l_{v^L}=l_{v^R}$, we obtain
\begin{align}\label{eq:dp}
    &\min\{DP[v^L,l_{v^L},  k^L_1(0), k^L_2(0)] + DP[v^R,l_{v^R},  k^R_1(0), k^R_2(0)],  \\
   &\text{where\,} k_h(0) = k^L_h(0) + k^R_h(0) + \1\{v \in V_h, l_v = 0\}, \forall h\in \{1,2\}\}. \nonumber
\end{align}
To simplify the notation, we define \eqref{eq:dp}
to be
$f(l_v,l_{v^L},l_{v^R})$.

Case 2: When $l_v=l^L_v\neq l^R_v$, we have $w(v, v^R)+ f(l_v,l_{v^L},l_{v^R})$.

Case 3: When $l_v=l^R_v\neq l^L_v$, we have $w(v, v^L) + f(l_v,l_{v^L},l_{v^R})$.

Case 4: When $l_v\neq l^R_v= l^L_v$, we have $w(v, v^L)+ w(v, v^R) + f(l_v,l_{v^L},l_{v^R})$.

The first case corresponds to cutting neither of $\{v, v^L\}$ and $\{v, v^R\}$; the second to cutting $\{v, v^L\}$ but not $\{v, v^R\}$; the third to cutting $\{v, v^R\}$ but not $\{v, v^R\}$; the last to cutting both edges.

To fill in the table, we initialize all leaves of the tree. Let $v$ be a leaf, then $DP[v, l_v, \1\{v \in V_1, l_v = 0\}, \1\{v \in V_2, l_v = 0\}]=0$ for $l_v\in\{0,1\}$. Let all other entries of leaves be infinity. We make a pass from the bottom to fill in the table. Then we check the root of the tree and find the minimum value in the table such that the partitioning is $\gamma$-fair. The running time of the algorithm is $O(k^3)$.
\end{proof}

Next, we show that our algorithm achieves a $O(\log^2 k)$ approximation for a set of checkpoints $\calT$
when the
set of checkpoints satisfies a doubling condition.

\begin{proof}[Proof of Theorem~\ref{thm:fair-multiple-checkpoint}]
    Similar to Theorem~\ref{thm:fair}, it is sufficient to show that we partition the bag $A$ into $\gamma$-fair parts w.r.t. the set of checkpoints $\calT$. In Theorem~\ref{thm:fair}, we show that our algorithm finds a $\gamma$-fair partitioning w.r.t. one checkpoint.  To find this $s+1$ partitioning, we iteratively call our algorithm at each checkpoint from $\tau_1$ to $\tau_s$. Specifically, we first run our algorithm to find a $\gamma$-fair partitioning w.r.t. $\tau_1$, $S_1$ and $A\backslash S_1$. Then, we partition the set $A\backslash S_1$ w.r.t. checkpoint $\tau_{2}$. We repeat this process until we get a $s+1$ partitioning of bag $A$.

    For every $i \in \{1,2,\cdots,s\}$, we show that the total weight of cut edges crossing the checkpoint $\tau_i$ in the algorithm solution, $\cup_{j=0}^{i-1} S_j$ and $\cup_{j=i}^s S_j$ is at most $O(\log^2 k)$ times the optimal $\gamma$-fair impedance. By Theorem~\ref{thm:fair}, it holds for the checkpoint $\tau_1$. Next, we consider any checkpoint $\tau_i > \tau_1$. We show that there exists a $\gamma$-fair partitioning of nodes in $\cup_{i=0}^i S_i$ at checkpoint $\tau_i$ such that the weight of cut edges is at most two times the optimal $\gamma$-fair impedance. By Theorem~\ref{thm:fair}, our algorithm finds a $O(\log k)$ approximation to this partitioning. Since there are at most $\log_2 k$ such checkpoints due to the doubling condition, the total weight of cut edges in our solution is at most $O(\log^2 k)$ times the optimal $\gamma$-fair impedance.

    Now we show the existence of a good partitioning in the set of nodes $S' = \cup_{j=i-1}^s S_j$ for every $i > 1$. Let $p^*$ be the optimal $\gamma$-fair crusade w.r.t. the set of checkpoints $\calT$. We denote the nodes contained in the optimal crusade $p^*$ before checkpoint $\tau_{i-1}$ but not contained in our solution before $\tau_{i-1}$ by
    $$S'_1 = (p^*_0 \backslash p^*_{\tau_{i-1}})\backslash \cup_{j=0}^{i-2} S_j.$$
    We have $S'_1$ is a subset of $S'$ and $\sizeof{S'_1} \leq \tau_{i-1}$. Since checkpoints satisfy $\tau_{i} - \tau_{i-1} \geq \tau_{i-1}$, we fully assign this set $S'_1$ to the part between the checkpoints $\tau_{i-1}$ and $\tau_{i}$.
    Note that the set $S' \backslash S'_1$ is contained in $p^*_{\tau_{i-1}}$ and at least $\tau_{i}-\tau_{i-1} - \sizeof{S'_1}$ among them are between checkpoints $\tau_i$ and $\tau_{i-1}$.
    Thus, let $S'_2$ be the set of the first $\tau_{i}-\tau_{i-1} - \sizeof{S'_1}$ nodes in $p^*_{\tau_{i-1}}$ that are also contained in $S'
    \setminus
    S'_1$. Then $S''= S'_1 \cup S'_2$ is a valid partition of length $\tau_{i}-\tau_{i-1}$.  Figure~\ref{fig:crusades} illustrates the
    construction of the partitioning.

    Then, we show that the cut $S''= S'_1 \cup S'_2$ is a good partitioning in $S'$. The total weight of this cut $S''$ in $S'$ is at most the sum of the weight of edges from $S'_1$ to $(S'')^c$ and edges from $S'_2$ to $(S'')^c$, which is at most two times the width of $p^*$. Note that $S'_1$ is a subset of $p^*_0      \setminus     
    p^*_{\tau_{i-1}}$, which is a $\gamma$-fair set with size $\tau_{i-1} \leq \tau_{i}-\tau_{i-1}$. The set $S'_2$ is a subset of $p^*_{\tau_{i-1}}
    \setminus
    p^*_{\tau_{i}}$, which is also a $\gamma$-fair set with size $\tau_{i}-\tau_{i-1}$. Therefore, we have $S''$ is a $2\gamma$-fair set.
\end{proof}

\begin{figure}[htbp]
    \centering
    \includegraphics[width=.5\linewidth]{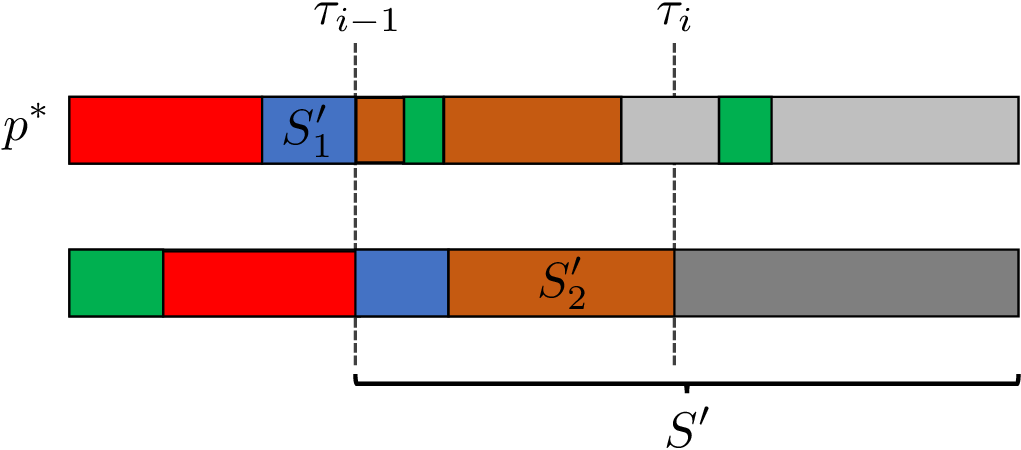}
    \caption{The construction of a good partition in $S'$. The top figure corresponds to the optimal $\gamma$-fair crusade. The bottom figure corresponds to a good partition.}
    \label{fig:crusades}
\end{figure}

\end{document}